\newtheorem{lemma}{Lemma}[section]
\newtheorem{thm}{Theorem}[section]
\begin{document}
\title{Higher-order rogue wave dynamics for a derivative nonlinear Schr\"odinger equation}
\author{Yongshuai Zhang$^1$, Lijuan Guo$^1$, Amin Chabchoub$^2$, Jingsong He$^{1*}$}
\thanks{$^*$ Corresponding author: hejingsong@nbu.edu.cn, jshe@ustc.edu.cn}
\dedicatory { $^1$ Department of Mathematics, Ningbo University,
Ningbo , Zhejiang 315211, P.\ R.\ China \\
$^2$Centre for Ocean Engineering Science and Technology, Swinburne University of Technology, Hawthorn, Victoria 3122, Australia
}

\begin{abstract}
The the mixed Chen-Lee-Liu derivative nonlinear Schr\"odinger equation (CLL-NLS) can be considered as simplest model to approximate the dynamics of weakly nonlinear and dispersive waves, taking into account the self-steepnening effect (SSE). The latter effect arises as a higher-order correction of the nonlinear Schr\"ordinger equation (NLS), which is known to describe the dynamics of pulses in nonlinear fiber optics, and constiutes a fundamental part of the generalized NLS. Similar effects are decribed within the framework of the modified NLS, also referred to as the Dysthe equation, in hydrodynamics. In this work, we derive fundamental and higher-order solutions of the CLL-NLS by applying the Darboux transformation (DT). Exact expressions of non-vanishing boundary solitons, breathers and a hierarchy of rogue wave solutions are presented. In addition, we discuss the localization characters of such rogue waves, by characterizing their length and width. In particular, we describe how the localization properties of first-order NLS rogue waves can be modified by taking into account the SSE, presented in the CLL-NLS. This is illustrated by use of an analytical and a graphical method. The results may motivate similar analytical studies, extending the family of the reported rogue wave solutions as well as possible experiments in several nonlinear dispersive media, confirming these theoretical results.
\end{abstract}

 \maketitle \vspace{-0.9cm}

\noindent {{\bf Keywords}: Chen-Lee-Liu derivative nonlinear Schr\"odinger equation, Darboux transformation, Rogue waves, Self-steepening effects}

\noindent {\bf PACS} numbers: 02.30.Ik,03.75.Lm,42.65.Tg\\

\section{Introduction}
The nonlinear Schr\"odinger equation (NLS) is one of the most relevant equations in physics. This integrable equation can be rigorously derived as an approximation to governing equations of several nonlinear and dispersive media \cite{Benney1,Zakharov1,Hasegawa1,Ablowitz1}. Recently, a wide class of solutions, such as the Peregrine soliton \cite{Peregrine1} and multi-Peregrine soliton, also referred to as Akhmediev-Peregrine breathers \cite{Akhmediev1985}, of the NLS are intensively discussed in physical and mathematical communities \cite{OnoratoReport}. The doubly-localized Peregrine soliton, which approaches a non-zero constant background in the infinite limit of the spatial and temporal periodicity, amplifies the amplitude of the carrier by factor of three at the co-ordinates origin. Multi-Peregrine solitons \cite{Akhmediev1} have similar dynamics, with the particular property to generate much higher maximal peak amplitudes, compared to background
 \cite{Dubard1,Dubard2,Gaillard1,Ankiewicz1,Kedziora1,Ohta1,He1,Guo1}. Due to these properties, Peregrine-type waves are suggested to model ``rogue waves" (RWs), known to appear in the ocean \cite{Shrira} and in other media \cite{NatureReview}. Mathematically speaking, modulationally unstable extreme waves admit high-intensity peaks, appearing from nowhere and disappearing without a trace, while evolving in time and space \cite{Akhmediev2}. Recently, exact solutions of the NLS, describing a new form of modulation instability dynamics, have been derived \cite{ZakharovGelash,GelashZakharov}. The concept of the RWs was
  first discussed in the studies of ocean waves \cite{Pelinovshy1,KharifPelinovsky,KharifPelinovskySlunyaev,Osborne1}, and gradually extended to other fields of research, such as for instance for capillary
  water waves \cite{Shats1}, optical fibers \cite{Solli1,Solli2,Dudley1} and Bose-Einstein condensates \cite{Bludov1}, which have been 
  summarized in very recent review papers \cite{NatureReview, onorato}.

   Only recently, experimental validation of such RW model has been successfully conducted in nonlinear fibers \cite{Kibler1}, in water wave tanks \cite{Chabchoub1,Chabchoub2,Chabchoub3,Chabchoub4}, and in plasmas \cite{Bailung1,Sharma1}.
The latter experimental studies have been performed based on the NLS modeling evolution equation.

In addition to the NLS, there are several other integrable evolution equations admitting Peregrine-type RW solutions
such as the Hirota equation, the modified Korteweg-de Vries equation, the Sasa-Satsuma equation, the Fokas-Lenells
equation, the NLS Maxwell-Bloch equation, the Hirota Maxwell-Bloch equation, the generalized NLS, the vector NLS,
the derivative NLS, the variable coefficient NLS and derivative NLS, the Davey-Stewartson
equation, and the KP-I equation \cite{Ankiewicz2,He2,Bandelow1,Chenshihua,He3,He4,Li1,Zha1,
Wang1,Xu1,Xu2,Guoo1,Guo2,Zhang1,He5,Xu3,Ohta2,Ohta3,Dubard3,CPL1,PRE1,Degasperis1,Degasperis2,Degasperis3}. Lately, fundamental rogue wave modes of the mixed Chen-Lee-Liu derivative nonlinear Schr\"odinger equation (CLL-NLS) \cite{Kundu1}
\begin{equation}\label{MNLS}
  {\rm i}r_t+r_{xx}+|r|^2r-{\rm i}|r|^2r_x=0
\end{equation}
have been reported \cite{Chow1} by use of the Hirota bilinear method. Clearly, the latter solution is physically more complex and more accurate in describing the propagation of optical pulses compared to the NLS or simplified CLL Eq. \cite{Chen1}
\begin{equation}
  \label{cll}
  {\rm i}r_t+r_{xx}+{\rm i}|r|^2r_x=0,
\end{equation}
since the CLL-NLS takes into dispersion, nonlinearity as well as self-steepening effect (SSE), described by the term $|r|^2r_x$, however, while ignoring self-phase-modulation (SPM) \cite{Moses1}. The SSE of light pulses, originating from their propagation in a medium with an intensity dependent index of refraction, was first introduced in \cite{Demartini1} and was observed in optical pulses with possible shock formation
\cite{Grischkowsky1}. It receives a significant attention for the propagation of electromagnetic waves in nonlinear fibers, using a femtosecond laser, since it plays a crucial role in the generation of supercontinuum \cite{DudleyPhysicsToday,ChabchoubSupercontinuum}. In mathematical terms, its source is the first nonlinear correction to the NLS in the description of very focused light pulses or significant sharp water wave packets for which the validity of the NLS is known to be violated, due to the related significant broadening of the spectrum \cite{Tzoar1,Anderson1}. In hydrodynamics, the CLL-NLS can be obtained from the modified NLS, also known as the Dysthe equation \cite{Dysthe} by ignoring the mean flow term,  whose contribution is small if the nonlinearity of the wave train is kept small. Therefore, exact CLL-NLS models may motivate experiments in nonlinear optical fibers as well as in water wave flumes \cite{Chow1}. Especially, taking into account the fact that exact RW solutions are closely related to the modulation instability of weakly nonlinear dispersive waves.

In this paper, we report exact solutions of the integrable CLL-NLS. To the author's best knowledge, this is {}{so far} the first derivation of such doubly-localized solutions using the DT. In Section 2 and Section 3 the integration scheme will be introduced and we will address the significant challenges using the DT, solving CLL-type equations. These major difficulties are the result of the corresponding asymmetry of the Lax pair, see details in the appendix of \cite{Chow1}. Exact solutions with particular focus on higher-order RWs is reported in Section 4, extending therefore the family of exact first-order solutions. Furthermore, we discuss the influence of the SSE on the localization {}{properties} of NLS RWs in Section 5. Due to {}{obvious} physical relevance of the CLL-NLS, we emphasize further analytical, numerical and experimental studies, related to the presented exact solutions of this integrable evolution equation.

\section{The DT for the coupled CLL-NLS}
In this section, we consider the $n$-fold DT for the coupled CLL-NLS
\begin{equation}\label{cmnls}
  \left\{
  \begin{aligned}
  r_t-{\rm i}r_{xx}+{\rm i}r^2q+rqr_x=0,\\
  q_t+{\rm i}q_{xx}-{\rm i}q^2r+qrq_x=0,
  \end{aligned}\right.
\end{equation}
which reduces to the CLL-NLS while $q=-\overline{r}$ and the over-bar denotes complex conjugation.
These two equations in \eqref{cmnls} are the compatibility conditions of the following Lax pair \cite{Clarkson1,Lv}:
\begin{equation}\label{Lax}
\left\{
\begin{aligned}
  \Phi_x=U\Phi=&({\rm i}\sigma_3\lambda^2+Q\lambda-\frac{1}{2}{\rm i}\sigma_3+\frac{1}{4}{\rm i}Q^2\sigma_3)\Phi,\\
  \Phi_t=V\Phi=&[-2{\rm i}\sigma_3\lambda^4-2Q\lambda^3+\left(2{\rm i}\sigma_3-{\rm i}Q^2\sigma_3\right)
  \lambda^2+(Q+{\rm i}\sigma_3Q_x-\frac{1}{2}Q^3)\lambda\\
   &-\frac{1}{2}{\rm i}\sigma_3-\frac{1}{8}{\rm i}Q^4\sigma_3+\frac{1}{4}(QQ_x-Q_xQ)]\Phi,
\end{aligned}
\right.
\end{equation}
with
\begin{equation*}
  \Phi(x,t,\lambda)=\left(\begin{matrix}
    f(x,t,\lambda)\\
    g(x,t,\lambda)
  \end{matrix}\right),\quad \sigma_3=\left(\begin{matrix} 1 &0\\ 0 &-1 \end{matrix}\right),\quad
  Q=\left(\begin{matrix} 0 &r\\ q &0 \end{matrix}\right).
\end{equation*}
It is trivial to see that $\Phi_k\triangleq \left(  \begin{matrix}
f_k\\
g_k
 \end{matrix} \right) \triangleq  \Phi(x,t,\lambda)\left|_{\lambda=\lambda_k}\right.=
\left.\left(\begin{matrix}f(x,t,\lambda)\\g(x,t,\lambda)\end{matrix}\right)\right|_{\lambda=\lambda_k}
$ gives
the eigenfunction of the Lax pair equations corresponding to $\lambda_k$. Indeed, we seek $n$ eigenfunctions to
get the determinant representation of the $n$-fold DT.
\begin{thm}\label{thm_nDT}
The $n$-fold DT for the coupled CLL-NLS is
  \begin{equation}
  \begin{aligned}
   T_n=T_n(\lambda;\lambda_1,\lambda_2,...,\lambda_n)=
   \begin{cases}
   \frac{1}{\sqrt{|\Delta_n^1||\Delta_n^2|}}
      \left(\begin{matrix}
      (T_n)_{11} &(T_n)_{12}\\
      (T_n)_{21} &(T_n)_{22}
      \end{matrix}\right) &\mbox{if $n$ is even},\\\\
   \frac{1}{\sqrt{|\Delta_n^1||\Delta_n^2|}}
     \left(\begin{matrix}
     \sqrt{H} &\\
     &\frac{1}{\sqrt{H}}
   \end{matrix}\right) \left(\begin{matrix}
      (T_n)_{11} &(T_n)_{12}\\
      (T_n)_{21} &(T_n)_{22}
      \end{matrix}\right) &\mbox{if $n$ is odd},
   \end{cases}
  \end{aligned}
  \end{equation}
the elements $(T_n)_{ij}$ $(i,j=1,2)$  are defined by
    \begin{equation*}
    \begin{aligned}
    (T_n)_{11}=\begin{vmatrix}
      \lambda^n &\xi_n^1\\
      \eta_n^1 &\Delta_n^2
    \end{vmatrix},\quad
    (T_n)_{12}=\begin{vmatrix}
      0 &\xi_n^2\\
      \eta_n^1 &\Delta_n^2
    \end{vmatrix},\quad
    (T_n)_{21}=\begin{vmatrix}
      0 &\xi_n^2\\
     {\eta_n^2} &{\Delta_n^1}
    \end{vmatrix},\quad
    (T_n)_{22}=\begin{vmatrix}
      \lambda^n &\xi_n^1\\
      {\eta_n^2} &{\Delta_n^1}
    \end{vmatrix},
    \end{aligned}
  \end{equation*}
$\eta_n^i$, $\xi_n^i$ and $\Delta_n^i$ $(i=1,2)$ are defined by
\begin{equation*}
{\eta_n^1}=\left(\begin{matrix}
      \lambda_1^nf_1 &\lambda_2^nf_2  &\lambda_3^nf_3 &\ldots &\lambda_n^nf_n
    \end{matrix}\right)^T,\quad
    {\eta_n^2}=\left(\begin{matrix}
      \lambda_1^ng_1  &\lambda_2^ng_2 &\lambda_3^ng_3 &\ldots &\lambda_n^ng_n
    \end{matrix}\right)^T,
\end{equation*}
 \begin{itemize}\setlength{\itemindent}{-2em}
    \item if $n$ is even,
    \begin{equation}
      \nonumber
      \xi_n^1=\left(\begin{matrix}
        0 &\lambda^{n-2} &0 &\lambda^{n-4} &\ldots &0 &1
      \end{matrix}\right),\quad
      \xi_n^2=\left(\begin{matrix}
        \lambda^{n-1} &0 &\lambda^{n-3} &0 &\ldots &\lambda &0
      \end{matrix}\right),
    \end{equation}
    \item if $n$ is odd,
    \begin{equation}
      \nonumber
      {\xi_n^1}=\left(\begin{matrix}
        0 &\lambda^{n-2} &0 &\lambda^{n-4} &\ldots &\lambda &0
      \end{matrix}\right),\quad
      {\xi_n^2}=\left(\begin{matrix}
        \lambda^{n-1} &0 &\lambda^{n-3} &0 &\ldots &0 &1
      \end{matrix}\right),
    \end{equation}
  \end{itemize}
and
\begin{equation*}
  \Delta_n^1=\left(\begin{matrix}A_n^1 &A_n^2 &A_n^3 &\ldots &A_n^n\end{matrix}\right)^T,\quad
  \Delta_n^2=\left(\begin{matrix}B_n^1 &B_n^2 &B_n^3 &\ldots &B_n^n\end{matrix}\right)^T,
\end{equation*}
with $A_n^k,\,B_n^k$ $(k=1,2,3,\ldots,n)$ defined by
\begin{itemize}\setlength{\itemindent}{-2em}
    \item if $n$ is even,
    \begin{equation*}
      \begin{aligned}
      A_n^k=&\left(\begin{matrix} \lambda_k^{n-1}f_k &\lambda_k^{n-2}g_k &\lambda_k^{n-3}f_k &\lambda_k^{n-4}g_k &\ldots &\lambda_k^3f_k &\lambda_k^2g_k &\lambda_k^1f_k &g_k\end{matrix}\right),\\
      B_n^k=&\left(\begin{matrix} \lambda_k^{n-1}g_k &\lambda_k^{n-2}f_k &\lambda_k^{n-3}g_k &\lambda_k^{n-4}f_k &\ldots &\lambda_k^3g_k &\lambda_k^2f_k &\lambda_k^1g_k &f_k\end{matrix}\right),
      \end{aligned}
    \end{equation*}
    \item if $n$ is odd,
    \begin{equation*}
      \begin{aligned}
      A_n^k=&\left(\begin{matrix} \lambda_k^{n-1}f_k &\lambda_k^{n-2}g_k &\lambda_k^{n-3}f_k &\lambda_k^{n-4}g_k &\ldots &\lambda_k^3g_k &\lambda_k^2f_k &\lambda_k^1g_k &f_k\end{matrix}\right),\\
      B_n^k=&\left(\begin{matrix} \lambda_k^{n-1}g_k &\lambda_k^{n-2}f_k &\lambda_k^{n-3}g_k &\lambda_k^{n-4}f_k &\ldots &\lambda_k^3f_k &\lambda_k^2g_k &\lambda_k^1f_k &g_k\end{matrix}\right).
      \end{aligned}
    \end{equation*}
  \end{itemize}
\end{thm}
 The solutions $(q_n,r_n)$ generated by the above n-fold DT have the following determinant representations.
\begin{thm}\label{thm_rn}
The $n$-th order solutions $r_n$ and $q_n$ are
\begin{equation}\label{qn}
   r_n=\begin{cases}
    \frac{|\Delta_n^1|}{|\Delta_n^2|}r-2{\rm i}\frac{|\Delta_n^4|}{|\Delta_n^2|} &\mbox{if $n$ is even},\\\\
    H\left(\frac{|\Delta_n^1|}{|\Delta_n^2|}r-2{\rm i}\frac{|\Delta_n^2|}{|\Delta_n^2|}\right) &\mbox{if $n$ is odd,}\end{cases}\qquad
   q_n=\begin{cases}
    \frac{|\Delta_n^2|}{|\Delta_n^1|}q-2{\rm i}\frac{|\Delta_n^3|}{|\Delta_n^1|} &\mbox{if $n$ is even},\\\\
    \frac 1H\left(\frac{|\Delta_n^2|}{|\Delta_n^1|}q-2{\rm i}\frac{|\Delta_n^3|}{|\Delta_n^1|}\right) &\mbox{if $n$ is odd},
    \end{cases}
  \end{equation}
the matrices $\Delta_n^j$ $(j=3,4)$ are defined by
\begin{equation*}
  \Delta_n^3=\left(\begin{matrix}C_n^1 &C_n^2 &C_n^3 &\ldots &C_n^n\end{matrix}\right)^T,\quad
  \Delta_n^4=\left(\begin{matrix}D_n^1 &D_n^2 &D_n^3 &\ldots &D_n^n\end{matrix}\right)^T,
\end{equation*}
with $C_n^k,\,D_n^k$ $(k=1,2,3,\ldots,n)$, given by
\begin{itemize}\setlength{\itemindent}{-2em}
    \item if $n$ is even,
    \begin{equation*}
      \begin{aligned}
      C_n^k=&\left(\begin{matrix} \lambda_k^{n}f_k &\lambda_k^{n-2}f_k &\lambda_k^{n-3}g_k &\lambda_k^{n-4}f_k &\ldots &\lambda_k^3g_k &\lambda_k^2f_k &\lambda_k^1g_k &f_k\end{matrix}\right),\\
      D_n^k=&\left(\begin{matrix} \lambda_k^{n}g_k &\lambda_k^{n-2}g_k &\lambda_k^{n-3}f_k &\lambda_k^{n-4}g_k &\ldots &\lambda_k^3f_k &\lambda_k^2g_k &\lambda_k^1f_k &g_k\end{matrix}\right),
      \end{aligned}
    \end{equation*}
    \item if $n$ is odd,
    \begin{equation*}
      \begin{aligned}
      C_n^k=&\left(\begin{matrix} \lambda_k^{n}f_k &\lambda_k^{n-2}f_k &\lambda_k^{n-3}g_k &\lambda_k^{n-4}f_k &\ldots &\lambda_k^3f_k &\lambda_k^2g_k &\lambda_k^1f_k &g_k\end{matrix}\right),\\
      D_n^k=&\left(\begin{matrix} \lambda_k^{n}g_k &\lambda_k^{n-2}g_k &\lambda_k^{n-3}f_k &\lambda_k^{n-4}g_k &\ldots &\lambda_k^3g_k &\lambda_k^2f_k &\lambda_k^1g_k &f_k\end{matrix}\right).
      \end{aligned}
    \end{equation*}
    \end{itemize}
\end{thm}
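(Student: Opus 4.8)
\emph{Proof strategy.} Since Theorem~\ref{thm_nDT} already supplies the $n$-fold Darboux matrix $T_n$, it remains to extract the transformed potentials $(r_n,q_n)$ from the intertwining relations
\begin{equation*}
T_{n,x}+T_nU=U^{[n]}T_n,\qquad T_{n,t}+T_nV=V^{[n]}T_n,
\end{equation*}
where $U^{[n]},V^{[n]}$ are $U,V$ of \eqref{Lax} with $Q$ replaced by $Q_n=\left(\begin{smallmatrix}0&r_n\\ q_n&0\end{smallmatrix}\right)$; the fact that the same $Q_n$ occurs in both is what makes $(r_n,q_n)$ a solution of \eqref{cmnls}. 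The plan is to read $Q_n$ off from the $\lambda\to\infty$ expansion of the first relation and then to evaluate the handful of coefficients of $T_n$ that enter, by cofactor expansion of its defining block determinants.

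\emph{Reading off $Q_n$.} Write $T_n=T_n^{(n)}\lambda^{n}+T_n^{(n-1)}\lambda^{n-1}+\cdots$. Expanding each block determinant of Theorem~\ref{thm_nDT} along its first row shows that $(T_n)_{11},(T_n)_{22}$ have leading terms $\lambda^{n}|\Delta_n^2|,\lambda^{n}|\Delta_n^1|$ while $(T_n)_{12},(T_n)_{21}$ have $\lambda$-degree at most $n-1$; hence, after the scalar prefactor $1/\sqrt{|\Delta_n^1||\Delta_n^2|}$ and, for odd $n$, the diagonal factor $\mathrm{diag}(\sqrt H,1/\sqrt H)$, the leading coefficient $T_n^{(n)}$ is diagonal and $T_n^{(n-1)}$ is purely off-diagonal. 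The $\lambda^{n+2}$-terms in $T_{n,x}+T_nU=U^{[n]}T_n$ re-confirm $[\,\sigma_3,T_n^{(n)}\,]=0$, and the $\lambda^{n+1}$-terms — to which $T_{n,x}$ and the $\lambda^{0}$-parts of $U,U^{[n]}$ make no contribution — yield the algebraic identity
\begin{equation*}
T_n^{(n)}Q_n-Q\,T_n^{(n)}=\mathrm{i}\,\big[\,\sigma_3,\,T_n^{(n-1)}\,\big].
\end{equation*}
Its $(1,2)$- and $(2,1)$-entries express $r_n$ and $q_n$ through $T_n^{(n)}$ and the two off-diagonal entries of $T_n^{(n-1)}$, i.e.\ through the top coefficients of $(T_n)_{12}$ and $(T_n)_{21}$. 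Since $T_n^{(n)}=\mathrm{diag}\!\big(\sqrt{|\Delta_n^2|/|\Delta_n^1|},\,\sqrt{|\Delta_n^1|/|\Delta_n^2|}\big)$ — times $\mathrm{diag}(\sqrt H,1/\sqrt H)$ for odd $n$ — the ratios $|\Delta_n^1|/|\Delta_n^2|$, $|\Delta_n^2|/|\Delta_n^1|$ in front of $r$, $q$ and the scalars $H$, $1/H$ in the odd case come out exactly as in \eqref{qn}.

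\emph{Identifying the subleading coefficients.} The last point is that the $\lambda^{n-1}$-coefficients of $(T_n)_{12}$ and $(T_n)_{21}$ are, up to sign and the normalization $1/\sqrt{|\Delta_n^1||\Delta_n^2|}$, the determinants $|\Delta_n^4|$ and $|\Delta_n^3|$. Expanding the block determinant $(T_n)_{12}$ of Theorem~\ref{thm_nDT} along its first row, the only source of $\lambda^{n-1}$ is the entry of $\xi_n^2$ equal to $\lambda^{n-1}$, and the associated minor is the determinant of the matrix obtained from $(\eta_n^1\mid\Delta_n^2)$ by deleting that column; reading this matrix off row by row and matching it with the definition of the rows $D_n^k$ (respectively $C_n^k$, for $(T_n)_{21}$) identifies the minor with $|\Delta_n^4|$ (respectively $|\Delta_n^3|$). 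Substituting into the identity above, together with the explicit $T_n^{(n)}$, gives \eqref{qn}. The even and odd cases proceed identically; the only differences are the parity-dependent placement of $f_k,g_k$ inside $\xi_n^i$ and the rows $A_n^k,\dots,D_n^k$, together with the extra diagonal factor that produces the $H,1/H$ prefactors at odd order. I expect this last step — matching a cofactor of a block determinant, with the correct sign and in the correct parity case, to the combinatorially defined rows $C_n^k,D_n^k$ — to be the only genuinely delicate point; once the intertwining of Theorem~\ref{thm_nDT} is granted, the asymptotic extraction is routine. As an independent check one may verify the case $n=1$ directly in \eqref{cmnls} and then deduce \eqref{qn} by induction, collapsing the $n$-fold product of one-fold Darboux matrices via the Jacobi/Sylvester determinant identity.
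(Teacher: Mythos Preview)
Your approach is essentially the paper's: the paper derives Theorem~\ref{thm_rn} in a single line by ``comparing the coefficient of $\lambda^{n-1}$ in $T_{nx}+T_nU=U^{[n]}T_n$'' (where $\lambda^{n-1}$ is evidently a misprint for $\lambda^{n+1}$, as you correctly have it), and your proposal simply unpacks this computation --- the diagonality of $T_n^{(n)}$, the identification of the off-diagonal subleading coefficients with $|\Delta_n^3|,|\Delta_n^4|$ via cofactor expansion --- in full. One small slip to fix: the $\lambda^{n+1}$-identity should read $T_n^{(n)}Q-Q_nT_n^{(n)}=\mathrm{i}\,[\sigma_3,T_n^{(n-1)}]$, i.e.\ $Q$ sits to the right of $T_n^{(n)}$ and $Q_n$ to the left, not the other way around; this does not affect the argument.
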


In theorem \ref{thm_nDT} and theorem \ref{thm_rn}, $(q,\,r)$ is a
 ``seed" solution of the coupled
CLL-NLS, $H$ is {}{an overall factor in the formula of the DT involved with} an integral function depending on $q$ and $r$, which satisfies the following conditions
\begin{equation}\label{Hc}
  \frac{\partial H}{\partial x}=\frac{1}{2}{\rm i}(qr-2)H,\quad \frac{\partial H}{\partial t}=-\frac{1}{4}(4{\rm i}+{\rm i}q^2r^2-2rq_x+2qr_x)H.
\end{equation}
A general analytical expression of $H$  is
\begin{equation}\label{HCLLNLS}
  H=\exp\left(\int^{(x,\,t)}_{(x_0,\,t_0)}\frac{1}{2}\mathrm{i}
  (qr-2)\mathrm{d}x-\frac{1}{4}(4{\rm i}+{\rm i}q^2r^2-2rq_x+2qr_x)\mathrm{d}t\right).
\end{equation}
{}{Let $a,c $ be two real constants, $b=a^2 +(a-1)c^2$, and then $q=-\overline{r}=c\exp\left({\rm i}(ax+bt)\right)$
 is a ``seed" solution of the CLL-NLS. For this case,}
\begin{equation}\label{HHE}
 H=\exp(-\frac 12\,{\rm i}( 2+{c}^{2}) x-\frac 14{\rm i}(4+\,{c}^{4}+4{c}^{2}a)t),
\end{equation}
{}{which will be used to generate breather solution  of the CLL-NLS by DT later.}
\section{Derivation of the $n$-fold DT}
In this section, we derive the $n$-fold DT and the $n$-th order solutions for the coupled
CLL-NLS {}{in order to prove  theorem \ref{thm_nDT} and theorem \ref{thm_rn}}. To obtain
the $n$-fold DT we consider the one- and two-fold DT at first, and then the $n$-fold DT can be obtained by iteration.

\subsection{The one-fold DT}
Without loss of generality, assuming the one-fold DT as
\begin{equation}
  \label{DT1}
  T_1(\lambda)=\left(\begin{matrix} a_1 &b_1\\ c_1 &d_1\end{matrix}\right)\lambda+\left(\begin{matrix}
  a_0 &b_0\\ c_0 &d_0\end{matrix}\right),
\end{equation}
$a_k$, $b_k$, $c_k$ and $d_k$ $(k=0,1)$ are complex functions of $x$ and $t$. Then, there
exists $\Phi^{[1]}=T_1\Phi$ satisfying the following conditions $\Phi^{[1]}_x=U^{[1]}\Phi^{[1]}$
and $\Phi^{[1]}_t=V^{[1]}\Phi^{[1]}$, where $U^{[1]}$ and $V^{[1]}$ have the same form as $U$ and $V$
except that $q$ and $r$ are replaced by $q_1$ and $r_1$. If so, we have
\begin{equation}\label{DTc}
  T_x+TU-U^{[1]}T=0, \quad\mbox{ and }\quad T_t+TV-V^{[1]}T=0.
\end{equation}
\begin{lemma} Let one-fold DT of the coupled CLL-NLS  be the form  of (\ref{DT1}), then it is given by
  \begin{equation}\label{nDT_1}
    T_1(\lambda)=T_1(\lambda,\lambda_1)=\frac{1}{\sqrt{f_1g_1}}\left(\begin{matrix}\sqrt{H} & \\ &\frac{1}{\sqrt{H}}
    \end{matrix}\right)\left(\begin{matrix} \lambda g_1 &-\lambda_1f_1\\ -\lambda_1g_1 &\lambda f_1\end{matrix}\right),
  \end{equation}
  and the new solution $(q_1,\,r_1)$, generated by above $T_1 $ from ``seed" $(q,r)$ is
  \begin{equation}\label{rn_1}
    r_1=H\left(\frac{g_1}{f_1}r+2{\rm i}\lambda_1\right),\quad q_1=\frac{1}{H}\left(\frac{f_1}{g_1}q-2{\rm i}\lambda_1\right).
  \end{equation}
  {}{Here, the overall factor $H$ is given by (\ref{HCLLNLS}).}
\end{lemma}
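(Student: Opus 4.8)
The plan is to determine the unknown coefficients in the ansatz (\ref{DT1}) by substituting it into the intertwining identities (\ref{DTc}) and expanding in powers of the spectral parameter, using the known eigenfunction $\Phi_1=(f_1,g_1)^T$ to fix the residual freedom. First I would write $T_1=T^{(1)}\lambda+T^{(0)}$ and feed it into the spatial part $T_{1,x}+T_1U-U^{[1]}T_1=0$, collecting the coefficients of $\lambda^3,\lambda^2,\lambda^1,\lambda^0$; since $U=\mathrm{i}\sigma_3\lambda^2+Q\lambda+U_0$ with $U_0=-\tfrac12\mathrm{i}\sigma_3+\tfrac14\mathrm{i}Q^2\sigma_3$, the $\lambda^3$-term yields $[T^{(1)},\sigma_3]=0$, so $T^{(1)}$ is diagonal, i.e. $b_1=c_1=0$.

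Next I would impose the Darboux (kernel) conditions. Because $\Phi^{[1]}=T_1\Phi$ must again solve the Lax system at $\lambda=\lambda_1$, where $\Phi_1$ is already a genuine solution, $\det T_1$ has to vanish at $\lambda_1$ with $\Phi_1\in\ker T_1(\lambda_1)$; moreover, from the symmetry $\sigma_3U(-\lambda)\sigma_3=U(\lambda)$ and $\sigma_3V(-\lambda)\sigma_3=V(\lambda)$ of the Lax pair, $\sigma_3\Phi_1$ is a solution at the parameter $-\lambda_1$, hence $T_1(-\lambda_1)\sigma_3\Phi_1=0$ as well. These four scalar equations force the diagonal part of $T^{(0)}$ to vanish and express its off-diagonal part through $f_1,g_1,\lambda_1$ and the two diagonal entries of $T^{(1)}$, so that $T_1$ acquires exactly the matrix shape in (\ref{nDT_1}) times an undetermined diagonal prefactor $\mathrm{diag}(a_1,d_1)$. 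The scalar normalization is then pinned down by requiring $\det T_1=\lambda^2-\lambda_1^2$; this is consistent precisely because $\mathrm{tr}\,U=\mathrm{tr}\,V=0$ makes $\det T_1$ independent of $x$ and $t$, and it forces $a_1d_1=1$.

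It then remains to identify the ratio $a_1/d_1$ and to read off (\ref{rn_1}). The $\lambda^2$-coefficient of $T_{1,x}+T_1U-U^{[1]}T_1=0$ is purely off-diagonal and gives, entrywise, $r_1d_1=a_1r-2\mathrm{i}b_0$ and $q_1a_1=d_1q+2\mathrm{i}c_0$; inserting the already-found $b_0,c_0$ turns these into $r_1=(a_1/d_1)\bigl(r+2\mathrm{i}\lambda_1 f_1/g_1\bigr)$ and $q_1=(d_1/a_1)\bigl(q-2\mathrm{i}\lambda_1 g_1/f_1\bigr)$. The $\lambda^1$-coefficient is purely diagonal; eliminating $q_1,r_1$ by means of these relations and simplifying with the spatial Lax equation obeyed by $(f_1,g_1)^T$ collapses it to $\partial_x\log a_1=\tfrac{\mathrm{i}}{4}(qr-2)+\tfrac12\partial_x\log(g_1/f_1)$, that is $a_1=\sqrt{H\,g_1/f_1}$ with $H$ satisfying the first equation of (\ref{Hc}); together with $a_1d_1=1$ this is exactly the prefactor in (\ref{nDT_1}), and the two reconstruction relations become (\ref{rn_1}). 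The $\lambda^0$-coefficient must then be checked to be an identity, and the whole temporal part $T_{1,t}+T_1V-V^{[1]}T_1=0$ (whose $\lambda^5$-term again gives $T^{(1)}$ diagonal, and whose lower orders reproduce the same $(q_1,r_1)$ together with the $t$-equation of (\ref{Hc}) for $H$) is handled by the analogous but longer computation; alternatively, once the $x$-intertwining holds one forms $\widetilde V=(T_{1,t}+T_1V)T_1^{-1}$, observes that the kernel structure of $T_1$ cancels the would-be poles at $\lambda=\pm\lambda_1$, and matches $\widetilde V$ with $V^{[1]}$ as $\lambda\to\infty$.

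The part I expect to be genuinely delicate is not any single identity but the bookkeeping forced by the asymmetry of this Lax pair: the $\tfrac14\mathrm{i}Q^2\sigma_3$ correction in $U$ and the quartic $V$ make the intermediate expressions bulky, and, above all, the prefactor cannot be written in local closed form — it carries the nonlocal factor $H$ fixed by the first-order system (\ref{Hc}). One therefore has to check separately that this system is compatible (equality of the mixed $x$- and $t$-derivatives of $H$, which uses that $(q,r)$ solves the coupled CLL-NLS) and that the $x,t$-dependent scalar $1/\sqrt{f_1g_1}$ — the integrating factor keeping $\det T_1$ constant — is consistent with covariance. These are exactly the points at which CLL-type Darboux transformations depart from the textbook AKNS construction.
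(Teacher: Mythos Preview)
Your proposal is correct and follows essentially the same route as the paper: substitute the linear-in-$\lambda$ ansatz into the spatial intertwining relation, read off $b_1=c_1=0$ from the $\lambda^3$ term, obtain the reconstruction formulae (\ref{rr}) for $r_1,q_1$ from the $\lambda^2$ term, and then use the lower-order coefficients together with the Lax equations for $(f_1,g_1)$ to identify the remaining diagonal prefactor with the integrating factor $H$ of (\ref{Hc}). The one organizational difference is that you invoke the $\sigma_3$-symmetry $\sigma_3U(-\lambda)\sigma_3=U(\lambda)$ and the associated second kernel condition $T_1(-\lambda_1)\sigma_3\Phi_1=0$ to \emph{derive} $a_0=d_0=0$, and you fix $a_1d_1=1$ by pinning $\det T_1=\lambda^2-\lambda_1^2$; the paper instead simply \emph{sets} $a_0=d_0=0$ ``in order to obtain the non-trivial solution'' and then chooses $a_1d_1=1$, $b_0c_0=\lambda_1^2$ ``without loss of generality'' after first deriving the coupled ODEs for $a_1,d_1,b_0,c_0$ directly from the $\lambda^1$ and $\lambda^0$ coefficients. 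Your symmetry/determinant packaging is a cleaner explanation of why those choices are natural, but the actual computations and the identification of $H$ via $(a_1b_0)_x$, $(a_1b_0)_t$ (equivalently your $\partial_x\log a_1$ equation) coincide with the paper's.
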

\begin{proof}
  Let $F(\lambda)=(F_{ij})=T_x+TU-U^{[1]}T=0$ $(i,j=1,2)$ and substitute $T_1$ \eqref{DT1} into $F$, then
  \begin{equation*}
  \begin{aligned}
    F_{11}=&(qb_1-r_1c_1)\lambda^2+\left(qb_0-r_1c_0+a_{1x}+\frac 14{\rm i}a_1\left(qr-q_1r_1\right)\right)\lambda+a_{0x}+\frac 14{\rm i}a_0\left(qr-q_1r_1\right),\\
    F_{12}=&-2{\rm i}\lambda^3b_1+\left(ra_1-r_1d_1-2{\rm i}b_0\right)\lambda^2+\left(ra_0-r_1d_0+{\rm i}b_1+b_{1x}-\frac 14{\rm i}b_1\left(qr+q_1r_1\right)\right)\lambda\\
           &+{\rm i}b_0+b_{0x}-\frac 14{\rm i}b_0\left(qr+q_1r_1\right),\\
    F_{21}=&2{\rm i}\lambda^3c_1+\left(qd_1-q_1a_1+2{\rm i}c_0\right)\lambda^2+\left(qd_0-q_1a_0-{\rm i}c_1+c_{1x}+\frac 14{\rm i}c_1\left(qr+q_1r_1\right)\right)\lambda\\
           &-{\rm i}c_0+c_{0x}+\frac 14{\rm i}c_0\left(qr+q_1r_1\right),\\
    F_{22}=&(rc_1-q_1b_1)\lambda^2+\left(rc_0-q_1b_0+d_{1x}-\frac 14{\rm i}d_1\left(qr-q_1r_1\right)\right)\lambda+d_{0x}-\frac 14{\rm i}d_0(qr-q_1r_1).
  \end{aligned}
  \end{equation*}
  {}{ Note  that  $b_1$ and $c_1$ are equal to zero from coefficient of $\lambda^3$, and then remaining coefficients of $\lambda^i(i=0,1,2)$ imply}
  \begin{equation}\label{rr}
    r_1=\frac {a_1}{d_1} r-\frac {2{\rm i}b_0}{d_1},\quad q_1=\frac {d_1}{a_1} q+\frac {2 {\rm i}c_0}{a_1},
  \end{equation}
  and
  \begin{equation}
  \begin{split}
    a_{1x}&=\frac{a_1c_0}{2d_1}r-\frac{{\rm i}b_0c_0}{d_1}-\frac{b_0}{2}q,\quad
    d_{1x}=\frac{d_1b_0}{2a_1}q+\frac{{\rm i}b_0c_0}{a_1}-\frac{c_0}{2}r,\\
    b_{0x}&=\frac{b_0^2}{2a_1}q-\frac{b_0c_0}{2d_1}r+\frac{{\rm i}b_0^2c_0}{a_1d_1}+\frac 12{\rm i}b_0qr-{\rm i}b_0,\quad
    c_{0x}=\frac{c_0^2}{2d_1}r-\frac{b_0c_0}{2a_1}q-\frac{{\rm i}c_0^2b_0}{a_1d_1}-\frac 12{\rm i}c_0qr+{\rm i}c_0.
  \end{split}
  \end{equation}
  Let $a_0=d_0=0$ according to the coefficients of $\lambda$ in order
  to obtain the non-trivial solution. After simple calculations, we obtain $(a_1d_1)_x=0$,
   $(b_0c_0)_x=0$ and $(a_1b_0)_x=\frac 12 {\rm i}a_1b_0(qr-2)$.
  Based on the above results and taking the similar procedure to the second
  {}{formula of}  \eqref{DTc}, we have $(a_1d_1)_t=0$, $(b_0c_0)_t=0$ and  $(a_1b_0)_t=-\frac 14
   a_1b_0(4{\rm i}+{\rm i}q^2r^2-2rq_x+2qr_x)$. Now, let $a_1d_1=1$ and $b_0c_0=\lambda_1^2$ without loss of generality. Moreover, according to $(a_1b_0)_{xt}=(a_1b_0)_{tx}$, it is reasonable to let $a_1b_0=\mu G$,
  where $G$ is the primitive integral function and $\mu$ is an integral constant. That is, $G$ satisfies
  \begin{equation}
    \frac{\partial G}{\partial x}=\frac{1}{2}{\rm i}(qr-2)G,\quad \frac{\partial G}{\partial t}=-\frac{1}{4}(4{\rm i}+{\rm i}q^2r^2-2rq_x+2qr_x)G.
  \end{equation}
  Thus, $G=H$, if we disregard the integral constant.

  {}{The explicit form of $T_1$ can be determined by $T_1\Phi_1|_{\lambda=\lambda_1}=0$, i.e.}
  \begin{equation*}
    a_1\lambda_1f_1+b_0g_1=0,\quad c_0f_1+d_1\lambda_1g_1=0.
  \end{equation*}
  For convenience, let $\mu=-\lambda_1$, then {}{unknown elements}  $a_1$, $d_1$, $b_0$, and $c_0$ are  solved by
  \begin{equation*}
    a_1=\sqrt{H}\sqrt{\frac{g_1}{f_1}},\quad d_1=\frac{1}{\sqrt{H}}\sqrt{\frac{f_1}{g_1}},
    \quad b_0=-\lambda_1\sqrt{H}\sqrt{\frac{f_1}{g_1}},\quad c_0=-\lambda_1\frac{1}{\sqrt{H}}\sqrt{\frac{g_1}{f_1}}.
  \end{equation*}
  That is, the form of one-fold DT is
  \begin{equation*}
     T_1(\lambda)=T_1(\lambda,\lambda_1)=\left(\begin{matrix} \lambda \sqrt{H}\sqrt{\frac{g_1}{f_1}}
     &-\lambda_1\sqrt{H}\sqrt{\frac{f_1}{g_1}}\\
     -\lambda_1\frac{1}{\sqrt{H}}\sqrt{\frac{g_1}{f_1}} & \lambda \frac{1}{\sqrt{H}}\sqrt{\frac{f_1}{g_1}}  \end{matrix}\right),
  \end{equation*}
  and the new solution $(q_1,\,r_1)$ can be expressed as
  \begin{equation*}
   r_1=H\left(\frac{g_1}{f_1}r+2{\rm i}\lambda_1\right),\quad q_1=\frac{1}{H}
   \left(\frac{f_1}{g_1}q-2{\rm i}\lambda_1\right).
  \end{equation*}
\begin{flushright}
Q.E.D.
\end{flushright}
\end{proof}
{}{
Note that transformed eigenfunctions associated with new solution $(q_1,r_1)$ are
\begin{equation}\label{thefirstDTegien}
\begin{matrix}
\Phi_j^{[1]}=\left(\begin{matrix} f_j^{[1]}\\ g_j^{[1]}\end{matrix}\right)=T(\lambda,\lambda_1)|_{\lambda=\lambda_j}\Phi_j.
\end{matrix}
\end{equation}It is trivial to see $\Phi^{[1]}_1=0$. In other words, $T_1$ annihilates its generating function which is a general property of the DT.
Therefore, we have to use a transformed eigenfunction $\Phi_2^{[1]}$ associated with $\lambda_2(\not= \lambda_1)$ in order to generate the next step DT.
}

\subsection{The two-fold DT}
By iteration, the two-fold DT for the coupled CLL-NLS is calculated as
\begin{equation*}
  T_2(\lambda)=T_2(\lambda,\lambda_1,\lambda_2)=T_1^{[1]}(\lambda,\lambda_2)T_1(\lambda,\lambda_1),
\end{equation*}
where
\begin{equation*}
  T_1^{[1]}(\lambda,\lambda_2)=\frac{1}{\sqrt{f_2^{[1]}g_2^{[1]}}}\left(\begin{matrix}\sqrt{H_1}
  & \\ &\frac{1}{\sqrt{H_1}}
    \end{matrix}\right)\left(\begin{matrix} \lambda g_2^{[1]} &-\lambda_2f_2^{[1]}\\
    -\lambda_2g_2^{[1]} &\lambda f_2^{[1]}\end{matrix}\right),
\end{equation*}
$H_1$ possesses the same form as $H$ in (\ref{HCLLNLS}), except  $q$ and $r$ replaced by $q_1$ and $r_1$. The definitions of $H_1$ and $\Phi_2^{[1]}$ are valid for $H_k$
and $\Phi_k^{[j]}$ (If $k<j$, $\Phi_k^{[j]}=0$)).  {}{According to the specific matrix forms of $T_1$ and $T_1^{[1]}(\lambda,\lambda_2)$, then
$T_2$  is expressed by }

\begin{equation}
  T_2(\lambda;\ \lambda_1,\lambda_2)=\left(\begin{matrix}
    a_2^{[1]}&\\
    &d_2^{[1]}
  \end{matrix}\right)\lambda^2+\left(\begin{matrix}
    &b_1^{[1]}\\
    c_1^{[1]}&
  \end{matrix}\right)\lambda+\left(\begin{matrix}
    a_0^{[1]}& \\ &d_0^{[1]}
  \end{matrix}\right),
\end{equation}
and
\begin{equation*}
  a_0^{[1]}=\lambda_1\lambda_2\sqrt{\frac{H_1f_2^{[1]}g_1}{Hg_2^{[1]}f_1}}, \quad
  d_0^{[1]}=\lambda_1\lambda_2\sqrt{\frac{Hg_2^{[1]}f_1}{H_1f_2^{[1]}g_1}}.
\end{equation*}
Note that $T_2(\lambda)\Phi_k|_{\lambda=\lambda_k}=0$ $(k=1,2)$, then four unknown elements  $a_2^{[1]}$, $d_2^{[1]}$, $b_1^{[1]}$, $c_1^{[1]}$ can be solved as follows according to Cramer's rule,
\begin{equation*}
  a_2^{[1]}=\frac{\delta_3}{\delta_1},\quad b_1^{[1]}=\frac{\delta_5}{\delta_1},\quad d_2^{[1]}=\frac{\delta_4}{\delta_2}, \quad c_1^{[1]}=\frac{\delta_6}{\delta_2},
\end{equation*}
where $\delta_k$ $(k=1,2,\ldots,6)$ are defined by
\begin{equation*}
 \begin{aligned}
 \delta_1=\begin{vmatrix} \lambda_1^2f_1 &\lambda_1g_1\\ \lambda_2^2f_2 &\lambda_2g_2 \end{vmatrix},  \quad
 \delta_2=\begin{vmatrix} \lambda_2^2g_1 &\lambda_1f_1\\ \lambda_2^2g_2 &\lambda_2f_2 \end{vmatrix},  \quad
 \delta_3=\begin{vmatrix} -a_0^{[1]}f_1 &\lambda_1g_1\\ -a_0^{[1]}f_2 &\lambda_2g_2 \end{vmatrix},\\
 \delta_4=\begin{vmatrix} -d_0^{[1]}g_1 &\lambda_1f_1\\ -d_0^{[1]}g_2 &\lambda_2f_2 \end{vmatrix},    \quad
 \delta_5=\begin{vmatrix} \lambda_1^2f_1 &-a_0^{[1]}f_1\\ \lambda_2^2f_2 &-a_0^{[1]}f_2 \end{vmatrix},\quad
 \delta_6=\begin{vmatrix} \lambda_1^2g_1 &-d_0^{[1]}g_1\\ \lambda_2^2g_2 &-d_0^{[1]}g_2 \end{vmatrix}.
 \end{aligned}
\end{equation*}
{}{Substituting above elements in matrix form of $T_2$,  then it becomes }
\begin{equation}\label{DT2}
  T_2(\lambda)=T_2(\lambda,\lambda_1,\lambda_2)=\frac{1}{\sqrt{\begin{vmatrix} \lambda_1f_1 &g_1\\ \lambda_2f_2 &g_2\end{vmatrix}\begin{vmatrix} \lambda_1g_1 &f_1\\ \lambda_2g_2 &f_2 \end{vmatrix}}}\left(\begin{matrix} \sqrt{\frac{g_1H_1}{f_1}} &\\ &\sqrt{\frac{f_1}{g_1H_1}}\end{matrix}\right)\left(\begin{matrix} (T_2)_{11} &(T_2)_{12}\\
  (T_2)_{21} &(T_2)_{22}\end{matrix}\right),
\end{equation}
and
 elements $(T_2)_{ij}$ $(i,j=1,2)$ are given by following determinants
\begin{equation}\nonumber
\begin{aligned}
  (T_2)_{11}=\begin{vmatrix} \lambda^2 &0 &1 \\ \lambda_1^2f_1 &\lambda_1g_1 &f_1\\ \lambda_2^2f_2 &\lambda_2g_2 &f_2 \end{vmatrix},\quad
  (T_2)_{12}=\begin{vmatrix} 0 &\lambda &0\\ \lambda_1^2f_1 &\lambda_1g_1 &f_1\\ \lambda_2^2f_2 &\lambda_2g_2 &f_2\end{vmatrix},\\
  (T_2)_{21}=\begin{vmatrix} {}{0} & \lambda &  0 \\ \lambda_1^2g_1 &\lambda_1f_1 &g_1\\ \lambda_2^2g_2 &\lambda_2f_2 &g_2 \end{vmatrix},\quad
  (T_2)_{22}=\begin{vmatrix} {} {\lambda^2} & 0 &1 \\ \lambda_1^2g_1 &\lambda_1f_1 &g_1\\ \lambda_2^2g_2 &\lambda_2f_2 &g_2\end{vmatrix}.
\end{aligned}
\end{equation}
Note that {}{the overall factor $H_1$ has an integral function} depending on $q_1$ and $r_1$. It implies
 that we need {}{to} apply the one-fold DT in order to obtain the two-fold. Thus, $T_2$ is not an explicit
 formula of the two-fold DT. Especially, as one iterates the above method,
 more integrals in overall factors  $H_k$ $(k>1)$ will be involved. This depends on $q_k$ and $r_k$.
 However, $q_k$ and $r_k$ are too cumbersome to be expressed in terms of explicit integrals in overall factors $H_k$. That is, it is not possible to get the explicit expressions of $T_k$ {} {if  $H_k$ can not be eliminated}.  Thus, eliminating
 the integrals in the overall factors $H_k$ is an unavoidable challenge. The next Lemma provides a crucial step to
 deal with this obstacle.  In the following lemma, $\dfrac{g_i^{[0]}}{f_i^{[0]}} \triangleq \dfrac{g_i}{f_i}$.
\begin{lemma}\label{lemma_H}
  Let {}{$i\geq k+1\geq1$}, then $\frac{g_i^{[k]}}{f_i^{[k]}}H_{k+1}$ is a constant.
\end{lemma}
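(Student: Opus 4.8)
The plan is to prove the identity by induction on $k$, after recasting it as two scalar differential identities. Writing $\rho_i^{[k]}:=g_i^{[k]}/f_i^{[k]}$, the assertion is equivalent to $\partial_x\big(\rho_i^{[k]}H_{k+1}\big)=0$ together with $\partial_t\big(\rho_i^{[k]}H_{k+1}\big)=0$. I would first record the flow obeyed by $\rho_i^{[k]}$: differentiating the quotient $g_i^{[k]}/f_i^{[k]}$ and using the spatial half of the Lax pair \eqref{Lax} with $(q,r)$ replaced by $(q_k,r_k)$ and $\lambda=\lambda_i$ (so that $Q^2$ is $q_kr_k$ times the identity), one obtains the Riccati equation
\begin{equation*}
\partial_x\rho_i^{[k]}=q_k\lambda_i-{\rm i}\Big(2\lambda_i^2-1+\frac{1}{2}q_kr_k\Big)\rho_i^{[k]}-r_k\lambda_i\big(\rho_i^{[k]}\big)^2,
\end{equation*}
and from the temporal half a similar, quartic-in-$\lambda_i$ evolution for $\partial_t\rho_i^{[k]}$; on the other side, $H_{k+1}$ satisfies \eqref{Hc} with $(q,r)=(q_{k+1},r_{k+1})$.

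The core of the argument is the diagonal case $i=k+1$. Advancing the one-fold DT \eqref{rn_1} from level $k$ to level $k+1$ gives $r_{k+1}=H_k\big(\rho_{k+1}^{[k]}r_k+2{\rm i}\lambda_{k+1}\big)$ and $q_{k+1}=H_k^{-1}\big(q_k/\rho_{k+1}^{[k]}-2{\rm i}\lambda_{k+1}\big)$, so the overall factor cancels in the product and
\begin{equation*}
q_{k+1}r_{k+1}=q_kr_k+4\lambda_{k+1}^2+2{\rm i}\lambda_{k+1}\Big(\frac{q_k}{\rho_{k+1}^{[k]}}-r_k\rho_{k+1}^{[k]}\Big).
\end{equation*}
Forming $\partial_x\big(\rho_{k+1}^{[k]}H_{k+1}\big)=\big(\partial_x\rho_{k+1}^{[k]}\big)H_{k+1}+\rho_{k+1}^{[k]}\partial_xH_{k+1}$ and inserting both the Riccati equation and $\partial_xH_{k+1}=\frac{1}{2}{\rm i}(q_{k+1}r_{k+1}-2)H_{k+1}$ with $q_{k+1}r_{k+1}$ as above, the $q_kr_k$-contributions cancel pairwise and the remaining polynomial in $\lambda_{k+1},q_k,r_k,\rho_{k+1}^{[k]}$ is identically zero; hence $\partial_x\big(\rho_{k+1}^{[k]}H_{k+1}\big)=0$. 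The temporal identity follows in the same way, now also using the DT expressions for $q_{k+1,x}$ and $r_{k+1,x}$ in the second relation of \eqref{Hc}. In particular this shows $H_{k+1}\propto 1/\rho_{k+1}^{[k]}$.

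For a general index $i\ge k+1$ I would close the induction through the transformation of the eigenfunction ratio under one DT step: iterating \eqref{thefirstDTegien} gives $\Phi_i^{[k]}=T_1^{[k-1]}(\lambda_i,\lambda_k)\Phi_i^{[k-1]}$, whence
\begin{equation*}
\rho_i^{[k]}=\frac{1}{H_{k-1}}\cdot\frac{\lambda_i\rho_i^{[k-1]}-\lambda_k\rho_k^{[k-1]}}{\lambda_i\rho_k^{[k-1]}-\lambda_k\rho_i^{[k-1]}}.
\end{equation*}
By the inductive hypothesis at level $k-1$, applied to the indices $i$ and $k$, both $\rho_i^{[k-1]}H_k$ and $\rho_k^{[k-1]}H_k$ are constants, so $H_k$ drops out of numerator and denominator and $\rho_i^{[k]}=c_i/H_{k-1}$ with $c_i$ constant; combining this with $H_{k+1}\propto 1/\rho_{k+1}^{[k]}\propto H_{k-1}$ (the diagonal case, together with the same cancellation applied to $\rho_{k+1}^{[k]}$) makes $\rho_i^{[k]}H_{k+1}$ constant and closes the induction step. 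The base case $k=0$ is the $i=1$ instance of the direct computation above, with $H_0=H$ given explicitly by \eqref{HHE}. I expect the main obstacle to be the temporal identity: $\partial_t\rho_i^{[k]}$ is quartic in $\lambda_i$ and carries $q_{k,x}$ and $r_{k,x}$, so one must first remove these via the DT relations and then verify a fairly long polynomial identity against $\partial_tH_{k+1}/H_{k+1}$ from \eqref{Hc}; a subordinate point is that invoking \eqref{Hc} for $(q_{k+1},r_{k+1})$ uses that the DT sends solutions of \eqref{cmnls} to solutions, which is established earlier in this section.
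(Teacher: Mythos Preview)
Your direct computation for the diagonal case $i=k+1$ is essentially the paper's proof: derive the Riccati flow for $\rho_i^{[k]}$ from the Lax pair, express $q_{k+1}r_{k+1}$ (and, for the $t$-part, also $q_{k+1,x},r_{k+1,x}$) via the one-fold DT, and check that the logarithmic derivatives of $\rho_i^{[k]}$ and of $H_{k+1}$ cancel. The paper carries out precisely this calculation, writing the DT formulas with the generic index $i$ throughout.

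Your inductive extension to indices $i>k+1$, however, has a genuine gap and rests on a misreading of the lemma. In the paper's proof the relation $r_{k+1}=H_k\bigl(\rho_i^{[k]}r_k+2{\rm i}\lambda_i\bigr)$ is valid only because $\Phi_i^{[k]}$ is the eigenfunction actually consumed at the $(k{+}1)$-th DT step; thus $q_{k+1},r_{k+1},H_{k+1}$ are all tied to that same index $i$, and the lemma is in effect only the diagonal statement (the hypothesis $i\ge k+1$ merely ensures $\Phi_i^{[k]}\neq0$). If instead one fixes $H_{k+1}$ via $\Phi_{k+1}^{[k]}$ and asks about $\rho_i^{[k]}H_{k+1}$ for an unrelated $i>k+1$ --- which is how your induction is organised --- the claim is already false at level $k=0$: for the zero seed $q=r=0$ one has $\rho_j\propto e^{-{\rm i}(2\lambda_j^2-1)x}$ and $H_1\propto e^{{\rm i}(2\lambda_1^2-1)x}$, so $\rho_2H_1\propto e^{2{\rm i}(\lambda_1^2-\lambda_2^2)x}$ is not constant. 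Correspondingly, your base case establishes only $i=1$ at $k=0$, and the inductive hypothesis you then invoke at level $k-1$ (that $\rho_i^{[k-1]}H_k$ is constant for the \emph{off-diagonal} index $i>k$) is never available. The diagonal computation is the entire content of the lemma and is all the paper proves or subsequently uses; the induction should be dropped.
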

\begin{proof}
  On one hand, {}{according to the x-part of the Lax pair for $\Phi^{[k]}_i$ and the $k$-th step of DT, a straightforward calculation implies}
  \begin{equation}\nonumber
    f_{ix}^{[k]}=({\rm i}\lambda_i^2-\frac 12 {\rm i}+\frac 14 {\rm i}q_kr_k)f_i^{[k]}+\lambda_ir_kg_i^{[k]},\quad g_{ix}^{[k]}=\lambda_iq_kf_i^{[k]}-({\rm i}\lambda_i^2-\frac 12 {\rm i}+\frac 14 {\rm i}q_kr_k)g_i^{[k]},
  \end{equation}
  \begin{equation}\nonumber
    r_{k+1}=H_k(\frac{g_i^{[k]}}{f_i^{[k]}}r_k+2{\rm i}\lambda_i),\quad q_{k+1}=\frac{1}{H_k}(\frac{f_i^{[k]}}{g_i^{[k]}}q_k-2{\rm i}\lambda_i).
  \end{equation}
According to the definition of $H_{k+1}$,
  $$
  \frac{H_{k+1,x}}{H_{k+1}}=\frac 12{\rm i}(q_{k+1}r_{k+1}-2)=\frac 12{\rm i}q_kr_k-{\rm i}+2{\rm i}\lambda_i^2-\lambda_i
  (\frac{f_i^{[k]}}{g_i^{[k]}}q_k-\frac{g_i^{[k]}}{f_i^{[k]}}r_k).
  $$
  Thus,
  \begin{equation}
  \begin{aligned}
    (\frac{g_i^{[k]}}{f_i^{[k]}}H_{k+1})_x=\frac{g_{ix}^{[k]}}{g_{i}^{[k]}}-\frac{f_{ix}^{[k]}}{f_{i}^{[k]}}
    +\frac{H_{k+1,x}}{H_{k+1}}=0.
  \end{aligned}
  \end{equation}
On the other hand, according to the $t$-part of Lax pair for $\Phi^{[k]}_i$, and the definition of $H_{k+1,t}$, a straightforward
 calculation implies
\begin{equation}\nonumber
\begin{aligned}
  \frac{f_{it}^{[k]}}{f_i^{[k]}}=&-2{\rm i}\lambda_i^4+\left(2{\rm i}-{\rm i}q_kr_k\right)\lambda_i^2-\frac 18 {\rm i}q_k^2r_k^2+\frac 14 r_kq_{k,x}-\frac 14 q_kr_{k,x}-\frac 12{\rm i}\\
  &-(2r_k\lambda_i^3-(r_k-\frac 12 r_k^2q_k+{\rm i}r_{k,x})\lambda_i) \frac{g_{i}^{[k]}}{f_i^{[k]}},\\
  \frac{g_{it}^{[k]}}{g_i^{[k]}}=&-(-2{\rm i}\lambda_i^4+\left(2{\rm i}-{\rm i}q_kr_k\right)\lambda_i^2-\frac 18 {\rm i}q_k^2r_k^2+\frac 14 r_kq_{k,x}-\frac 14 q_kr_{k,x}-\frac 12{\rm i})\\
  &-(2q_k\lambda_i^3-(q_k-\frac 12 q_k^2r_k-{\rm i}q_{k,x})\lambda_i)\frac{f_{i}^{[k]}}{g_i^{[k]}},
\end{aligned}
\end{equation}
and
{}{
  \begin{align*}
  \frac{H_{k+1,t}}{H_{k+1}}=&-\frac{1}{4}(4{\rm i}+{\rm i}q_{k+1}^2r_{k+1}^2-2r_{k+1}q_{k+1,x}+2q_{k+1}r_{k+1,x})\\
  =&-4{\rm i}\lambda_i^4+(4{\rm i}-2{\rm i}q_{k}r_{k})\lambda_i^2
  -\frac 14{\rm i}q_{k}^2r_{k}^2-\frac 12(q_kr_{k,x}-r_kq_{k,x})-{\rm i}\\
  &+(2q_k\lambda_i^3-(q_k-\frac 12 q_k^2r_k-{\rm i}q_{k,x})\lambda_i)\frac{f_{i}^{[k]}}{g_i^{[k]}}-(2r_k\lambda_i^3-(r_k-\frac 12 r_k^2q_k+{\rm i}r_{k,x})\lambda_i) \frac{g_{i}^{[k]}}{f_i^{[k]}}.
  \end{align*}}
{}{Above three expressions give }
\begin{equation}
(\frac{g_i^{[k]}}{f_i^{[k]}}H_{k+1})_t=\frac{g_{it}^{[k]}}{g_{i}^{[k]}}-\frac{f_{it}^{[k]}}{f_{i}^{[k]}}
    +\frac{H_{k+1,t}}{H_{k+1}}=0.
\end{equation}
\begin{flushright}
Q.E.D.
\end{flushright}

\end{proof}
Based on  lemma \ref{lemma_H}, let $\dfrac{g_i^{[k]}}{f_i^{[k]}}H_{k+1}=1$ with{}{out} loss of
generality. In this case, {}{$\dfrac{g_1}{f_1}H_{1}=1$, and} then two-fold DT in \eqref{DT2} is simplified as
\begin{equation}
  T_2(\lambda)=T_2(\lambda,\lambda_1,\lambda_2)=\frac{1}{\sqrt{\begin{vmatrix} \lambda_1f_1 &g_1\\
  \lambda_2f_2 &g_2\end{vmatrix}\begin{vmatrix} \lambda_1g_1 &f_1\\ \lambda_2g_2 &f_2 \end{vmatrix}}}
  \left(\begin{matrix} (T_2)_{11} &(T_2)_{12}\\
  (T_2)_{21} &(T_2)_{22}\end{matrix}\right).
\end{equation}
\subsection{The $n$-fold DT} Let us consider the $n$-fold DT for the coupled CLL-NLS
 with the similar method as above. Since
$$T_n(\lambda)=T_n(\lambda,\lambda_1,\lambda_2,\ldots,\lambda_n)=\prod_{k=1}^nT_1^{[k-1]}(\lambda,\lambda_k),$$
let
\begin{equation}
  T_n(\lambda)=T_n(\lambda,\lambda_1,\lambda_2,\ldots,\lambda_n)=\sum_{l=1}^nP_l\sigma_1^{n-l}\lambda^l+P_0,
\end{equation}
where $P_l$ and $\sigma_1$ are defined by
\begin{equation}\nonumber
  P_l=\left(\begin{matrix} P_{l_1} &0\\ 0& P_{l_2}\end{matrix}\right),\quad \sigma_1=\left(\begin{matrix} 0 &1\\ 1&0\end{matrix}\right).
\end{equation}
Furthermore, $P_0$ is determined by
\begin{equation}
  P_0=\prod_{k=1}^n\left(\begin{matrix}0 &-\lambda_k\sqrt{H_{k-1}}\sqrt{\frac{f_{k}^{[k-1]}}{g_{k}^{[k-1]}}}\\
  -\lambda_k\sqrt{H_{k-1}}\sqrt{\frac{g_{k}^{[k-1]}}{f_{k}^{[k-1]}}} &0\end{matrix}\right).
\end{equation}
Here, $H_0=H$, $f_1^{[0]}=f_1$ and $g_1^{[0]}=g_1$. According to lemma \ref{lemma_H}, then
\begin{itemize}\setlength{\itemindent}{-2em}
  \item if $n$ is odd,
  \begin{equation}
    P_0=\left(\begin{matrix}
     &-\lambda_1\lambda_2\lambda_3\ldots\lambda_n\sqrt{H}\sqrt{\frac{f_n^{[n-1]}}{g_n^{[n-1]}}}\\
    -\lambda_1\lambda_2\lambda_3\ldots\lambda_n\frac{1}{\sqrt{H}}\sqrt{\frac{g_n^{[n-1]}}{f_n^{[n-1]}}}&
  \end{matrix}\right),
  \end{equation}\\
  \item if $n$ is even,
  \begin{equation}
    P_0=\left(\begin{matrix}
     \lambda_1\lambda_2\lambda_3\ldots\lambda_n\frac 1 {\sqrt{H}}\sqrt{\frac{f_n^{[n-1]}}{g_n^{[n-1]}}}&\\
   & \lambda_1\lambda_2\lambda_3\ldots\lambda_n\sqrt{{H}}\sqrt{\frac{g_n^{[n-1]}}{f_n^{[n-1]}}}
  \end{matrix}\right).
  \end{equation}
\end{itemize}
\begin{lemma}
  \label{lemma_eigfun}
  After the action of $k$-fold DT, the eigenfunction $\Phi_j$ ($j>k$) related to $\lambda_j$ becomes
  \begin{itemize}\setlength{\itemindent}{-2em}
  \item{if $k$ is odd}
  \begin{equation}
    \Phi_j^{[k]}=\frac 1 {\sqrt{|\Delta_k^1||\Delta_k^2|}}\left(\begin{matrix}\sqrt{H} &\\ &\frac{1}{\sqrt{H}}\end{matrix}\right)
    \left(\begin{matrix} \det(A_{k+1}^1,A_{k+1}^2,A_{k+1}^3,\ldots,A_{k+1}^k,A_{k+1}^j)^T\\
    \det(B_{k+1}^1,B_{k+1}^2,B_{k+1}^3,\ldots,B_{k+1}^k,B_{k+1}^j)^T\end{matrix}\right),
  \end{equation}\\
  \item{if $k$ is even}
  \begin{equation}
    \Phi_j^{[k]}=\frac 1 {\sqrt{|\Delta_k^1||\Delta_k^2|}}
    \left(\begin{matrix} \det(A_{k+1}^1,A_{k+1}^2,A_{k+1}^3,\ldots,A_{k+1}^k,A_{k+1}^j)^T\\
    \det(B_{k+1}^1,B_{k+1}^2,B_{k+1}^3,\ldots,B_{k+1}^k,B_{k+1}^j)^T\end{matrix}\right).
  \end{equation}
  \end{itemize}
\end{lemma}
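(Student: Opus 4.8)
The plan is to prove the lemma by induction on $k$, using the factorization $T_k(\lambda)=T_1^{[k-1]}(\lambda,\lambda_k)\,T_{k-1}(\lambda)$, which yields $\Phi_j^{[k]}=T_1^{[k-1]}(\lambda_j,\lambda_k)\,\Phi_j^{[k-1]}$; thus once the form of $\Phi_j^{[k-1]}$ is in hand, the form of $\Phi_j^{[k]}$ follows from one more application of the explicit one-fold matrix \eqref{nDT_1}, with $(q,r)$ replaced by $(q_{k-1},r_{k-1})$, with $(f_1,g_1)$ replaced by $(f_k^{[k-1]},g_k^{[k-1]})$, and with $H$ replaced by $H_{k-1}$. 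The base case $k=1$ is read off directly from \eqref{thefirstDTegien} and \eqref{nDT_1}: with the identifications $|\Delta_1^1|=f_1$, $|\Delta_1^2|=g_1$, $A_2^l=(\lambda_l f_l,\,g_l)$ and $B_2^l=(\lambda_l g_l,\,f_l)$, expanding the $2\times2$ determinants $\det(A_2^1,A_2^j)$ and $\det(B_2^1,B_2^j)$ reproduces, up to an irrelevant overall constant, the two components of $T_1(\lambda_j,\lambda_1)\Phi_j$, the diagonal factor $\mathrm{diag}(\sqrt H,1/\sqrt H)$ already being present in \eqref{nDT_1}.

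For the inductive step from $k-1$ to $k$, I would substitute the level-$(k-1)$ formulas for the two pairs $(f_j^{[k-1]},g_j^{[k-1]})$ and $(f_k^{[k-1]},g_k^{[k-1]})$ into $\Phi_j^{[k]}=T_1^{[k-1]}(\lambda_j,\lambda_k)\,\Phi_j^{[k-1]}$. Since $f^{[k-1]}$ carries a factor $\sqrt H$ (or $1$) and $g^{[k-1]}$ the reciprocal, all powers of $H$ cancel inside the products, and each component of $\Phi_j^{[k]}$ reduces to a fixed scalar prefactor times a bilinear combination of size-$k$ determinants; for the first component this combination reads
\[
\lambda_j\,\det(B_k^1,\dots,B_k^{k-1},B_k^k)\,\det(A_k^1,\dots,A_k^{k-1},A_k^j)-\lambda_k\,\det(A_k^1,\dots,A_k^{k-1},A_k^k)\,\det(B_k^1,\dots,B_k^{k-1},B_k^j),
\]
and the second component is the same expression with $A$ and $B$ exchanged. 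The key structural fact is the column recursion of the defining vectors: $A_k^l$ equals $\lambda_l\,A_{k-1}^l$ with one extra entry ($f_l$ or $g_l$, according to the parity, since the entries alternate between $f$ and $g$) appended, whereas \emph{deleting} the first entry of $A_k^l$ produces $B_{k-1}^l$, and dually with the roles of $A$ and $B$ interchanged. Using this, the four size-$k$ determinants above --- noting that $\det(A_k^1,\dots,A_k^{k-1},A_k^k)=|\Delta_k^1|$ and $\det(B_k^1,\dots,B_k^{k-1},B_k^k)=|\Delta_k^2|$ --- together with the size-$(k-1)$ determinant $|\Delta_{k-1}^2|$ (respectively $|\Delta_{k-1}^1|$ for the second component), are precisely the five minors of the size-$(k+1)$ determinant $\det(A_{k+1}^1,\dots,A_{k+1}^k,A_{k+1}^j)$ that enter the Desnanot--Jacobi (Dodgson condensation) identity, once the rows are ordered so that the deleted corner rows are those indexed by $k$ and by $j$ and the deleted columns are the first and the last. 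That identity collapses the bilinear combination into $|\Delta_{k-1}^2|\cdot\det(A_{k+1}^1,\dots,A_{k+1}^k,A_{k+1}^j)$, and after cancellation against the $|\Delta_{k-1}^1||\Delta_{k-1}^2|$ in the denominator the claimed determinantal numerator $\det(A_{k+1}^1,\dots,A_{k+1}^k,A_{k+1}^j)$ survives; the second component is handled identically with the $B$-determinants and $|\Delta_{k-1}^1|$.

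It then remains to check that the surviving scalar prefactors assemble into the factor $1/\sqrt{|\Delta_k^1||\Delta_k^2|}$, together with the diagonal matrix $\mathrm{diag}(\sqrt H,1/\sqrt H)$ exactly when $k$ is odd. Here $f_k^{[k-1]}g_k^{[k-1]}=|\Delta_k^1||\Delta_k^2|/(|\Delta_{k-1}^1||\Delta_{k-1}^2|)$, while the overall factor $H_{k-1}$ coming from $T_1^{[k-1]}$ is pinned down through Lemma~\ref{lemma_H} in its normalized form $\tfrac{g_i^{[k-2]}}{f_i^{[k-2]}}H_{k-1}=1$ (for $k=2$ this is the relation $\tfrac{g_1}{f_1}H_1=1$ recorded after Lemma~\ref{lemma_H}): applying the level-$(k-2)$ statement with $i=k-1$ gives $H_{k-1}=|\Delta_{k-1}^1|/|\Delta_{k-1}^2|$ when $k$ is even and $H_{k-1}=H\,|\Delta_{k-1}^1|/|\Delta_{k-1}^2|$ when $k$ is odd, so that the telescoping of the $\sqrt{H_{k-1}}$-factors leaves exactly one $\sqrt H$ (or none), as it did in the computation of $P_0$ above. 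Substituting these back produces precisely the two cases of the statement. I expect this last bookkeeping --- the $H_{k-1}$-to-$H$ reduction via Lemma~\ref{lemma_H}, the signs generated by the row permutations in the Desnanot--Jacobi set-up, and a consistent choice of square-root branches --- to be the main technical burden; the determinantal collapse itself is a single, if laborious, invocation of the classical identity. Throughout, the transformed eigenfunctions are needed only up to an overall nonzero constant, since they merely seed the next Darboux step, so any such constant (in particular the sign delivered by the condensation identity) is immaterial.
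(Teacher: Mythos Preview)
The paper does not actually give a proof of this lemma; it records only the remark ``this lemma is obtained with the inductive method, and the detailed proof is omitted.'' Your proposal is precisely that inductive argument with the details supplied --- the factorization $T_k=T_1^{[k-1]}T_{k-1}$, the Desnanot--Jacobi collapse of the bilinear determinant combination, and the reduction of $H_{k-1}$ via Lemma~\ref{lemma_H} --- so you are taking the same route as the paper, only more explicitly.
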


{\sl Remark:} this lemma is obtained with the inductive method, and the detailed proof is omitted.

Therefore, the explicit expression of $P_0$ is obtained as follows based on lemma \ref{lemma_eigfun}.
\begin{equation}\label{induction}
    P_{0}=
   \begin{cases}
   \left(\begin{matrix}
    \lambda_1\lambda_2\lambda_3\ldots\lambda_k\sqrt{\frac{|\Delta_n^1|}{{|\Delta_n^2|}}}& \\
    &\lambda_1\lambda_2\lambda_3\ldots\lambda_k\sqrt{\frac{|\Delta_n^2|}{|\Delta_n^1|}}
   \end{matrix}\right)&\mbox{if $k$ is even},\\\\
   \left(\begin{matrix}
     \sqrt{H} &\\
     &\frac{1}{\sqrt{H}}
   \end{matrix}\right)\left(\begin{matrix}
     &-\lambda_1\lambda_2\lambda_3\ldots\lambda_k\sqrt{\frac{|\Delta_n^1|}{{|\Delta_n^2|}}} \\
    -\lambda_1\lambda_2\lambda_3\ldots\lambda_k\sqrt{\frac{|\Delta_n^2|}{|\Delta_n^1|}}&
   \end{matrix}\right) &\mbox{if $k$ is odd}.
   \end{cases}
  \end{equation}
{\sl Proof of theorem  \ref{thm_nDT} and \ref{thm_rn}:}
Note that the kernel of $T_n$ consists of  $\Phi_k(k=1,2,\cdots,n)$, i.e., $T_n(\lambda,\lambda_1,\lambda_2,\ldots,\lambda_n)
\Phi_k|_{\lambda=\lambda_k}=0$. Substituting \eqref{induction} into these algebraic
 equations,
the elements of $P_k$ $(k=1,2,\ldots,n)$ in $n$-fold DT are obtained by the Cramer's rule.
This proves theorem \ref{thm_nDT}. Then, theorem \ref{thm_rn} is derived {}{by comparing the coefficient of $\lambda^{n-1}$
in} $T_{nx}+T_nU=U^{[n]}T_n$. \mbox{\hspace{10cm}}
$\Box$

\section{Exact solutions of the CLL-NLS}
In this section, we consider the DT and solution of the coupled CLL-NLS \eqref{cmnls}
under the reduction condition $r=-\overline{q}$, which leads {}{to} the DT and solutions of the CLL-NLS.
\begin{thm}\label{thm_choice}
 Let
\begin{equation}\label{choicesoflambdas}
  \begin{cases}
    \lambda_k=-\overline{\lambda}_k  &\mbox{if $n$ is odd},\\
    \lambda_{2k}=-\overline{\lambda}_{2k-1} &\mbox{if $n$ is even},
  \end{cases}
\end{equation}
then
 solution $(q_n,\,r_n)$ in theorem \ref{thm_rn} preserves the reduction condition $r_n=-\overline{q}_n$.
This means that {}{$T_n$ in theorem \ref{thm_nDT} is a n-fold DT of the CLL-NLS \eqref{MNLS}, and correspondingly,}
$r_n$ is an $n$-th order solution of the CLL-NLS \eqref{MNLS}.
\end{thm}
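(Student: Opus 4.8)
\medskip
\noindent\emph{Proof proposal.} The plan is to reduce the assertion to a single conjugation symmetry of the Lax pair \eqref{Lax} under the reduction, and then to carry that symmetry through the construction of Section~3. First I would record the symmetry. Writing $r=-\overline q$, one has $\sigma_1\overline Q\sigma_1=-Q$; moreover $Q^2=qr\,I$ with $\overline{qr}=qr$, $Q^3=qr\,Q$, $Q^4=(qr)^2I$, and $QQ_x-Q_xQ=(rq_x-r_xq)\sigma_3$ with $\overline{rq_x-r_xq}=-(rq_x-r_xq)$. Inserting these (and $\sigma_1\sigma_3\sigma_1=-\sigma_3$) into $U$ and $V$, a term-by-term check gives
\begin{equation}\label{LaxSym}
\sigma_1\,\overline{U(x,t,-\overline\lambda)}\,\sigma_1=U(x,t,\lambda),\qquad
\sigma_1\,\overline{V(x,t,-\overline\lambda)}\,\sigma_1=V(x,t,\lambda).
\end{equation}
Consequently, if $\Phi=(f,g)^T$ solves \eqref{Lax} at $\lambda$ then $\sigma_1\overline\Phi=(\overline g,\overline f)^T$ solves \eqref{Lax} at $-\overline\lambda$; and the same reality inputs applied to \eqref{Hc} show that the exponent defining $H$ is purely imaginary, so $\overline H=H^{-1}$ (as is transparent in \eqref{HHE}).

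\smallskip
Next I would fix eigenfunctions compatibly with \eqref{LaxSym}. If $n$ is even, for each pair I take $\Phi_{2k}:=\sigma_1\overline{\Phi_{2k-1}}$, which by \eqref{LaxSym} is a legitimate eigenfunction for $\lambda_{2k}=-\overline{\lambda_{2k-1}}$, so that $f_{2k}=\overline{g_{2k-1}}$ and $g_{2k}=\overline{f_{2k-1}}$. If $n$ is odd, then $-\overline{\lambda_k}=\lambda_k$ and $\Phi\mapsto\sigma_1\overline\Phi$ is an antilinear involution of the solution space at $\lambda_k$, so I choose $\Phi_k$ in its fixed set, i.e. $g_k=\overline{f_k}$. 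With these choices the engine of the argument is that a single Darboux step with a purely imaginary spectral parameter and a fixed eigenfunction preserves the reduction: substituting $\overline{g_1}=f_1$, $\overline{f_1}=g_1$, $\overline{\lambda_1}=-\lambda_1$, $\overline r=-q$ and $\overline H=H^{-1}$ into \eqref{rn_1} gives $\overline{r_1}=-q_1$ directly; likewise a ``conjugate--pair'' double step $\lambda_2=-\overline{\lambda_1}$, $\Phi_2=\sigma_1\overline{\Phi_1}$ preserves it.

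\smallskip
To finish there are two equivalent routes. \emph{(a) Iteration.} Once the reduction holds at a given stage, \eqref{LaxSym} lifts to $\sigma_1\overline{T_k(-\overline\lambda)}\sigma_1=T_k(\lambda)$, so the transformed eigenfunctions $\Phi_j^{[k]}=T_k(\lambda_j)\Phi_j$ again obey the symmetry required for the next step; one then composes $n$ self-conjugate steps when $n$ is odd, or $n/2$ conjugate--pair double steps when $n$ is even, obtaining $r_n=-\overline{q_n}$ by induction. \emph{(b) Directly on Theorem~\ref{thm_rn}.} Conjugating a generating block $A_n^k$ and applying the relations above turns it --- up to rescaling the columns by powers of $-1$ coming from $\overline{\lambda_{2k-1}}=-\lambda_{2k}$ (resp. $\overline{\lambda_k}=-\lambda_k$), and, for $n$ even, a pairing permutation of the rows --- into the corresponding block $B_n^k$; this yields $\overline{|\Delta_n^1|}=\varepsilon_n|\Delta_n^2|$ and $\overline{|\Delta_n^2|}=\varepsilon_n|\Delta_n^1|$, and the identical manipulation applied to $C_n^k,D_n^k$ gives $\overline{|\Delta_n^3|}=\varepsilon_n'|\Delta_n^4|$, $\overline{|\Delta_n^4|}=\varepsilon_n'|\Delta_n^3|$, with signs $\varepsilon_n,\varepsilon_n'\in\{\pm1\}$. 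Feeding these, together with $\overline r=-q$ and $\overline H=H^{-1}$, into the formulas of Theorem~\ref{thm_rn}, and noting that all residual signs cancel in the ratios, one gets $\overline{r_n}=-q_n$. In either case $r_n=-\overline{q_n}$, so $T_n$ is an $n$-fold DT of \eqref{MNLS} and $r_n$ an $n$-th order solution.

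\smallskip
The step I expect to be the actual work is the bookkeeping, not the idea. In route (b) one has to check that the permutation sign and the product of the column signs combine correctly --- separately for the even and the odd layouts of $A_n^k,B_n^k,C_n^k,D_n^k$ --- so that nothing survives in the ratios appearing in Theorem~\ref{thm_rn}; in route (a) the corresponding obstacle is establishing $\sigma_1\overline{T_k(-\overline\lambda)}\sigma_1=T_k(\lambda)$, i.e. that the reduction of the potential is inherited by the Darboux matrix itself, which rests on the uniqueness --- prescribed degree, leading term and kernel --- of the Darboux matrix built in Section~3. Everything else is the routine term-by-term verification of \eqref{LaxSym} and the one- and two-step substitutions.
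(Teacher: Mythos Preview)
Your proposal is correct and, in its route (a), follows essentially the same approach as the paper: the paper first extracts from the $x$-part of the Lax pair the eigenfunction symmetry (your \eqref{LaxSym} in component form), then verifies $r_1=-\overline{q_1}$ for $n=1$ with $\lambda_1=-\overline{\lambda_1}$ and $r_2=-\overline{q_2}$ for $n=2$ with $\lambda_2=-\overline{\lambda_1}$ by direct substitution into \eqref{rn_1} and the $n=2$ formula, and then asserts the general case ``by iteration.'' Your write-up is in fact more careful than the paper's about what the inductive step requires --- namely that the symmetry $\sigma_1\overline{T_k(-\overline\lambda)}\sigma_1=T_k(\lambda)$ is inherited so the transformed eigenfunctions again satisfy the pairing --- which the paper leaves implicit.

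Your route (b), working directly with the determinant blocks $A_n^k,B_n^k,C_n^k,D_n^k$ of Theorems~\ref{thm_nDT}--\ref{thm_rn}, is a genuinely different alternative the paper does not pursue. It trades the inductive passage through $T_1^{[k]}$ for a one-shot conjugation of the closed-form determinants; the cost is the sign/permutation bookkeeping you flag, while the benefit is that it bypasses any appeal to uniqueness of the Darboux matrix and gives the reduction directly at the level of the final formulas.
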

\begin{proof}
  When $q=-\overline{r}$. From $x$-part of the Lax pair, we have
\begin{equation}
  \overline{f}_x=-({\rm i}\overline{\lambda}^2-\frac 12{\rm i}-\frac 14{\rm i}|r|^2)\overline{f}+\overline{\lambda}\overline{r}\overline{g},\quad
  \overline{g}_x=-\overline{\lambda}r\overline{f}+({\rm i}\overline{\lambda}^2-\frac 12{\rm i}-\frac 14{\rm i}|r|^2)\overline{g}.
\end{equation}
That is
\begin{equation}
  \left(\begin{matrix} \overline{g}_x\\ \overline{f}_x\end{matrix}\right)=\left(\begin{matrix} {\rm i}\overline{\lambda}^2-\frac 12{\rm i}-\frac 14{\rm i}|r|^2 &-\overline{\lambda}r\\ \overline{\lambda}\overline{r} &-{\rm i}\overline{\lambda}^2+\frac 12{\rm i}+\frac 14{\rm i}|r|^2\end{matrix}\right)\left(\begin{matrix} \overline{g}\\ \overline{f}\end{matrix}\right).
\end{equation}
The same property can be obtained from the $t$-part of the Lax pair. Thus, it is obvious
that $\left(\begin{matrix}\overline{g}\\ -\overline{f}\end{matrix}\right)$ is a new
eigenfunction for $\lambda=\overline{\lambda}$ or $\left(\begin{matrix}\overline{g}\\ \overline{f}\end{matrix}\right)$
for $\lambda=-\overline{\lambda}$. For example, $\left(\begin{matrix}\overline{g}_j\\ -\overline{f}_j\end{matrix}\right)$
is a new eigenfunction related to $\lambda_k$ when $\lambda_k=\overline{\lambda}_j$, and
$\left(\begin{matrix}\overline{g}_j\\ \overline{f}_j\end{matrix}\right)$ is another one when $\lambda_k=-\overline{\lambda}_j$.

Based on the above property of the eigenfunctions, we prove that the potentials $q_n$ and $r_n$ will
satisfy the reduction condition, if {}{the choices in \eqref{choicesoflambdas} are adopted in the n-fold DT}.

Note that $\overline{H}=\frac 1H$. For $n=1$, let $\lambda_1=-\overline{\lambda}_1$, then
  \begin{equation*}
  \begin{aligned}
    r_1=&H(\frac {g_1}{f_1}r+2{\rm i}\lambda_1)=H(\frac{\overline{f_1}}{f_1}r+2{\rm i}\lambda_1),\\
    \overline{r}_1=&\frac 1H(\frac{f_1}{\overline{f}_1}\overline{r}-2{\rm i}\overline{\lambda}_1)=-\frac 1H(\frac {f_1}{g_1}q-2{\rm i}\lambda_1)=-q_1.
  \end{aligned}
  \end{equation*}
For $n=2$, let $\lambda_2=-\overline{\lambda}_1$, then $f_2=\overline{g_1}$ and $g_2=\overline{f_1}$. Therefore,
\begin{equation*}
\begin{aligned}
  r_2=&\frac{(\lambda_2g_2f_1-\lambda_1g_1f_2)r+2{\rm i}(\lambda_2^2-\lambda_1^2)f_1f_2}{\lambda_2g_1f_2-\lambda_1f_1g_2}=
 \frac{(-\overline{\lambda}_1{|f_1|}^2-\lambda_1{|g_1|}^2)r+2{\rm i}(\overline{\lambda}_1^2-\lambda_1^2)f_1\overline{g}_1}{-\overline{\lambda}_1|g_1|^2-\lambda_1|f_1|^2},\\
 q_2=&\frac{(\lambda_2f_2g_1-\lambda_1f_1g_2)q-2{\rm i}(\lambda_2^2-\lambda_1^2)g_1g_2}{\lambda_2f_1g_2-\lambda_1g_1f_2}=
 \frac{(-\overline{\lambda}_1|g_1|^2-\lambda_1|f_1|^2)q-2{\rm i}(\overline{\lambda}_1^2-\lambda_1^2)g_1\overline{f}_1}{-\overline{\lambda}_1|f_1|^2-\lambda_1|g_1|^2}=-\overline{r}_2.
\end{aligned}
\end{equation*}
When $n>2$, the reduction condition $q_n=-\overline{r}_n$ can also be obtained by iteration.
\end{proof}

{}{Next,} we provide the solutions of the CLL-NLS, {}{and then discuss their localization characters. In order to achieve this purpose, the eigenfunctions associated with the ``seed" solution depend on the determinant representation of DT. }

\subsection{Eigenfunctions for the Lax pair}
In this subsection, we consider the solution for the Lax pair. Let the seed solution be
\begin{equation}\label{seed}
  q=-\overline{r}=c\exp({\rm i}\rho),\quad \rho=ax+bt,\quad b=a^2+(a-1)c^2, \quad a,c\in\mathbb{R}
\end{equation}
We substitute \eqref{seed} into the Lax pair equations \eqref{Lax} and solving the eigenfunction as follows:
\begin{equation}
\begin{aligned}
  \psi=\left(\begin{matrix}\psi_1(x,t,\lambda)\\ \psi_2(x,t,\lambda)\end{matrix}\right)=\left(\begin{matrix}{D{\rm exp}\left({\rm i}({\frac {\sqrt {S}} 4 x+\frac {\sqrt {S}} 8 \left( -4\,{\lambda}^{2}+2+{c}^{2}+2\,a \right) t-\frac {\rho}2})\right)}\\
 {\frac {D\left( -{\rm i}\sqrt {S}+4\,{\rm i}{\lambda}^{2}-2\,{\rm i}-{\rm i}{c}^{2}+2\,{\rm i}a \right){{\rm exp}\left({\rm i}({\frac {\sqrt {S}} 4 x+\frac {\sqrt {S}} 8 \left( -4\,{\lambda}^{2}+2+{c}^{2}+2\,a \right) t+\frac {\rho}2})\right)} }{4\lambda\,c}}\end{matrix}\right),\\
\end{aligned}
\end{equation}
where $S$ is defined by
\begin{equation*}
  S=16\,{\lambda}^{4}+ \left( 16\,a-16+8\,{c}^{2} \right) {\lambda}^{2}-
4\,{c}^{2}a+4+4\,{c}^{2}-8\,a+{c}^{4}+4\,{a}^{2},
\end{equation*}
and $D$ is a constant. Note that $\left(\begin{matrix} \overline{\psi}_2(x,t,-\overline{\lambda})\\
\overline{\psi}_1(x,t,-\overline{\lambda})\end{matrix}\right)$ is also an eigenfunction under the reduction
condition $r=-\overline{q}$. Thus, we can induce a new eigenfunction by use of the superposition principle:
\begin{equation}\label{eigfunnozero}
  \Phi=\left(\begin{matrix} f(x,t,\lambda)\\g(x,t,\lambda)\end{matrix}\right)=\left(\begin{matrix} \psi_1(x,t,\lambda)+\overline{\psi}_2(x,t,-\overline{\lambda})\\
  \psi_2(x,t,\lambda)+\overline{\psi}_1(x,t,-\overline{\lambda})
  \end{matrix}\right).
\end{equation}
Let $\lambda=\lambda_j$, then $\Phi$ \eqref{eigfunnozero} leads to the eigenfunction $\Phi_j$ related to $\lambda_j$.
Furthermore, when $q=-\overline{r}=c\exp({\rm i}\rho)$, the explicit expression of $H$ is given in \eqref{HHE}.

Now, with the help of theorems \ref{thm_rn} and \ref{thm_choice}, we can present the solutions of the CLL-NLS.

\subsection{Soliton, breather and first-order rogue wave solutions}For $n=1$, let $\lambda_1={\rm i}\beta$ and $D=1$ in
{}{theorem \ref{thm_rn}},
then the first-order solution of the CLL-NLS is
  \begin{equation}\label{dbsoliton}
    r_1=(\frac{L_1\cos\theta+{\rm i}L_2\sin\theta}{-L_1\cos\theta+{\rm i}L_2\sin\theta}c-2\beta)H,
  \end{equation}
with

    \begin{align*}
    \theta=&(\frac 12\,{\beta}^{2}+\frac 14+\frac 18\,{c}^{2}+\frac 14\,a)\sqrt {S_{{1}}}\,t+\frac 14\sqrt {S_{{1}}}\,x,\\
      L_1=&-4\,\beta\,c+\sqrt {{S_1}}+4\,{\beta}^{2}+2+{c}^{2}-2\,a,\quad
      L_2=4\,\beta\,c+\sqrt {{S_1}}+4\,{\beta}^{2}+2+{c}^{2}-2\,a,\\
      S_1=&16\,{\beta}^{4}+ \left( -8\,{c}^{2}-16\,a+16 \right) {\beta}^{2}-4\,{c
}^{2}a+4+4\,{c}^{2}-8\,a+{c}^{4}+4\,{a}^{2}.
    \end{align*}

It is obvious that $r_1$ leads to a periodic solution, if $S_1>0$ and gives a soliton solution if $S_1<0$.
When $x$ and $t$ tend both to infinity, $|r_1|^2$ tends to $2a-2$ $(\mbox{here } a>1)$.  When $S_1<0$, $|r_1|$
reaches to its amplitude  of $|2\beta+c|$
at $x=-\frac 12 \left( 4\,{\beta}^{2}+2+{c}^{2}+2\,a \right) t$. Thus, if $2a-2>|2\beta+c|^2$, it gives
a dark soliton. Otherwise, it leads {}{to} a bright solitonic localization with a non-vanishing boundary. That is, the CLL-NLS can
give both bright soliton and dark soliton. This is different from the NLS, that depends on the signs of the dispersion and nonlinear parameter in order to admit rather dark or bright soliton solutions. The bright soliton and dark soliton solutions are shown in Fig. \ref{fig_soliton}.

\begin{figure}[!htbp]
\centering
\raisebox{20 ex}{$|r_1|$}\subfigure[]{\includegraphics[height=6cm,width=6cm]{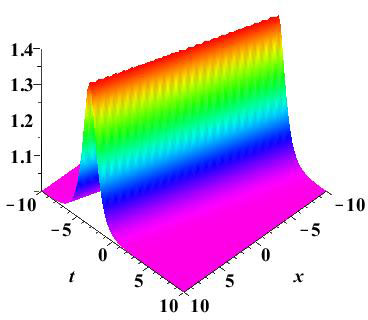}}
\qquad
\raisebox{20 ex}{$|r_1|$}\subfigure[]{\includegraphics[height=6cm,width=6cm]{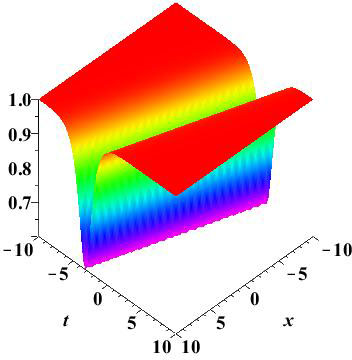}}\caption{(Color online) The bright and dark solitons of the CLL-NLS with parameters: (a) $a=1.5,c=1$ and $\beta=0.2$, (b) $a=1.5,c=1$ and $\beta=-0.2$.}\label{fig_soliton}
\end{figure}

For $n=2$, let $D=1$, $\lambda_1=\alpha_1+{\rm i}\beta_1$ and $\lambda_2=-\alpha_1+{\rm i}\beta_1$ in {}{theorem \ref{thm_rn}}, then
\begin{equation}
  r_2=\frac{(-\overline{\lambda}_1{|f_1|}^2-\lambda_1{|g_1|}^2)r+2{\rm i}(\overline{\lambda}_1^2-\lambda_1^2)f_1\overline{g}_1}{-\overline{\lambda}_1|g_1|^2-\lambda_1|f_1|^2}
\end{equation}
gives the second-order solution of the CLL-NLS. For {}{convenience}, let $a=2\,{\beta_{{1}}}^{2}-\frac 12\,{c}^{2}-2\,{\alpha_{{1}}}^{2}+1$, then
\begin{equation}\label{breather}
  r_2=-\frac{K_1c\cos\theta_1+{\rm i}K_2c\sin\theta_1+(K_3c+K_5)\cosh\theta_2+{\rm i}(K_6-K_4c)\sinh\theta_2}{-K_1\cos\theta_1+{\rm i}K_2\sin\theta_1+K_3\cosh\theta_2+{\rm i}K_4\sinh\theta_2}\exp(-{\rm i}\rho)
\end{equation}
with

  \begin{align*}
  \theta_1=&((4\alpha_1^2-4\beta_1^2-2)t-x)K_0,\quad
  \theta_2=4\alpha_1\beta_1tK_0,\quad
    K_0=\sqrt{(c^2+4\alpha_1^2)(c^2-4\beta_1^2)},\\
    K_1=&8\alpha_1^3\beta_1+2c^2\alpha_1\beta_1+2\alpha_1\beta_1K_0,\quad
    K_2=2c^2\alpha_1\beta_1-8\alpha_1\beta_1^3+2\alpha_1\beta_1K_0,\\
    K_3=&c^3\alpha_1+4c\alpha_1^3+c\alpha_1K_0,\quad
    K_4=4c\beta_1^3-c^3\beta_1-c\beta_1K_0,\\
    K_5=&-8c^2\alpha_1\beta_1^2-32\alpha_1^3\beta_1^2-8\alpha_1\beta_1^2K_0,\quad
    K_6=8c^2\alpha_1^2\beta_1-32\alpha_1^2\beta_1^3+8\alpha_1^2\beta_1K_0.
  \end{align*}

Note that the trajectory of $r_2$ is defined by
$$(4\alpha_1^2-4\beta_1^2-2)t-x=0,$$
if $K_0^2<0$, and by
$$4\alpha_1\beta_1t=0,$$
if $K_0^2>0$. Thus, we can get both the spatial periodic breather solution (similar to the NLS Akhmediev breather \cite{Abreather})
and the temporal periodic breather solution (similar to the NLS Kuznetsov-Ma breather \cite{Kuznetsovbreather,Mabreather}). In fact, this solution $r_2$ can
travel periodically with an additional velocity in
the $(x,t)$-plane. Three kinds of breather solutions, propagating along the $(x,t)$-plane with different angles, are shown
in Fig. \ref{fig_breather}.

\begin{figure}[!htbp]
\centering
\raisebox{20 ex}{$|r_2|$}\subfigure[]{\includegraphics[height=6cm,width=6cm]{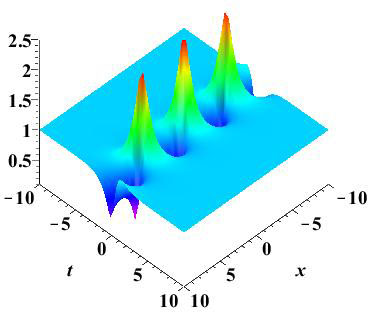}}
\qquad
\raisebox{20 ex}{$|r_2|$}\subfigure[]{\includegraphics[height=6cm,width=6cm]{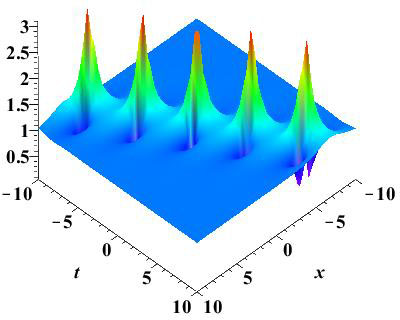}}
\qquad
\raisebox{20 ex}{$|r_2|$}\subfigure[]{\includegraphics[height=6cm,width=6cm]{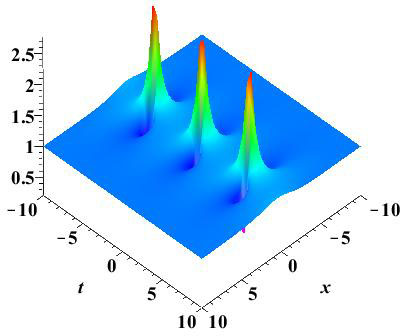}}
\caption{(Color online) The breather solutions of the CLL-NLS with parameters: (a) $c=1$, $\alpha=0.75$ and $\beta=0.4$; (b) $c=1$, $\alpha=0.8$ and $\beta=0.55$; (c) $c=1$, $\alpha_1^2=\beta_1^2+\frac 12$, and $\beta_1=0.52$.}\label{fig_breather}
\end{figure}

After a simple analysis, we emphasize that the periodicity of the breather solution
is proportional to $\frac{1}{K_0}$, i.e. when $K_0$ tends to zero,
 the distance of two peaks goes to
infinity which leaves only one peak locating on the $(x,t)$-plane. Thus, let $c\rightarrow2\beta_1$,
then $r_2$ in \eqref{breather} leads {}{to} a new solution, having the property to possess only one local peak and surrounding two holes
which is very similar to the Peregrine solution and therefore, being an appropriate to model RWs. This kind of doubly-localized rational solution is described by
\begin{equation}\label{1rw}
  r_{2r}=\frac{2\beta_1(L_{11}+{\rm i}L_{12})}{L_{11}+{\rm i}L_{13}+4}\exp(-{\rm i}\rho),
\end{equation}
with

\begin{align*}
  L_{11}=&(16\alpha_1^2\beta_1^2+16\beta_1^4)x^2-(128\alpha_1^4\beta_1^2-64\beta_1^4-64\alpha_1^2\beta_1^2-128\beta_1^6)xt
  -(256\alpha_1^4\beta_1^2-64\alpha_1^2\beta_1^2\\
  &-256\beta_1^8-256\alpha_1^6\beta_1^2-256\beta_1^6-64\beta_1^4)t^2-3,\\
  L_{12}=&8\beta_1^2x+16\beta_1^2t-96\alpha_1^2\beta_1^2t,\quad
  L_{13}=(32\alpha_1^2\beta_1^2-64\beta_1^4-16\beta_1^2)t-8\beta_1^2x.
\end{align*}

When $x$ and $t$ tend to infinity, $|r_{2r}|$  tends to $2\beta_1$. Moreover, the maximum peak amplitude
 is equal to $6\beta_1$, which is three times the background amplitude. The profiles are shown in Fig. \ref{fig_1rw}, and this solution is the same as presented in \cite{Chow1}. The latter has been derived using the Hirota bilinear method, while difficulties using the DT have been also discussed in \cite{Chow1}.

\subsection{Higher-order rogue wave solutions}
Inspired by above method, we consider \sout{the} higher-order
RWs of the CLL-NLS in this subsection. Generally, it is difficult to derive higher-order RWs from multi-breather solutions, since the explicit expression of $n$-th order breather is very challenging to calculate when $n>4$. Similarly for the NLS, for which the formulae is given by theorem \ref{thm_rn}, an indeterminate form $\frac{0}{0}$ is a consequence, when eigenvalues $\lambda_k$ tend to a limit point (from a breather to a doubly-localized RW solution). Thus, we derive the higher-order RWs directly from the determinant expressions of solutions in theorem \ref{thm_rn} by adopting a Taylor expansion \cite{He1,Guoo1,Guo2,Zhang1}.
\begin{thm}\label{thm_rw}
  Let $n=2N$, $\lambda_{2k-1}=\sqrt{\frac{1-a}{2}}+\frac{{\rm i}c}2+\epsilon^2$ $(a<1)$ and $\lambda_{2k}=-\overline{\lambda}_{2k-1}$, by applying the Taylor expansion, then {}{a} determinant expression of the $N$-th order RW is given as
  \begin{equation}
    r_{nr}=\frac{|\widehat{\Delta}_n^1|}{|\widehat{\Delta}_n^2|}r-2{\rm i}\frac{|\widehat{\Delta}_n^4|}{|\widehat{\Delta}_n^2|},
  \end{equation}
where $\widehat{\Delta}_n^k$ $(k=1,2,3,4)$ are defined by
\begin{equation}
  \widehat{\Delta}_n^k=\left(\left.\frac{\partial^{n_i}}{\partial\epsilon^{n_i}}\right|_{\epsilon=0}
      ({\Delta}_n^k)_{ij}\right)_{n\times n}.
\end{equation}
Here, $n_i=i$ if $i$ is odd and $n_i=i-1$ if $i$ is even.
\end{thm}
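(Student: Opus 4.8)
The plan is to realize $r_{nr}$ as a \emph{coalescence limit} of the $n$-th order solution $r_n$ of Theorem~\ref{thm_rn}, the same way a multi-breather degenerates into a higher-order Peregrine solution. First I would check admissibility of the chosen spectral data: with $\lambda_{2k-1}=\lambda_0+\epsilon^2$, $\lambda_0:=\sqrt{\frac{1-a}{2}}+\frac{{\rm i}c}{2}$ (a real radical since $a<1$), and $\lambda_{2k}=-\overline{\lambda}_{2k-1}$, condition \eqref{choicesoflambdas} in the even case holds, so by Theorem~\ref{thm_choice} the formula of Theorem~\ref{thm_rn} genuinely produces solutions of the CLL-NLS \eqref{MNLS}. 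Plugging in the non-vanishing-background seed \eqref{seed} and the eigenfunctions \eqref{eigfunnozero} evaluated at $\lambda=\lambda_j$, one verifies that $\lambda_0$ is precisely a simple zero of the quartic $S$ sitting under the radical in \eqref{eigfunnozero} — equivalently the degeneration point of $\sqrt S$ — so that as $\epsilon\to0$ all the $\lambda_{2k-1}$ collapse onto the branch point $\lambda_0$ (and all $\lambda_{2k}$ onto $-\overline{\lambda}_0$). This is the breather$\to$rogue-wave locus, and along it each of $|\Delta_n^1|$, $|\Delta_n^2|$, $|\Delta_n^4|$ acquires a zero, so Theorem~\ref{thm_rn} reads $\tfrac{0}{0}$ and must be resolved by a Taylor expansion, in the spirit of \cite{He1,Guoo1,Guo2,Zhang1}.

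Second, I would set up the expansion. Because $\lambda-\lambda_0=\epsilon^2$ and $S$ has a simple root at $\lambda_0$ (the four roots $\pm\lambda_0,\pm\overline{\lambda}_0$ being distinct for $a<1$, $c\neq0$), one has $\sqrt{S(\lambda_0+\epsilon^2)}=\epsilon\cdot(\text{analytic in }\epsilon^2)$, and likewise at the partner branch point $-\overline{\lambda}_0$; hence $\Phi(x,t,\lambda_0+\epsilon^2)=(f,g)^T$ of \eqref{eigfunnozero}, as well as the reduction partner used at $\lambda_{2k}$, is analytic in $\epsilon$. The CLL-specific structural fact I would prove by direct computation — using the normalization $D=1$, the explicit $\psi$ in \eqref{eigfunnozero}, and $H$ in \eqref{HHE} — is that $\Phi(x,t,\lambda_0)=0$; more sharply, that the even-in-$\epsilon$ parts of $\psi_1(\lambda_0+\epsilon^2)$ and of $\overline{\psi_2(-\overline{\lambda}_0-\epsilon^2)}$ cancel to all orders (and similarly for the second component), so that $f$ and $g$ have $\epsilon$-expansions in \emph{odd} powers only, i.e. $\epsilon^{-1}f,\ \epsilon^{-1}g$ extend analytically in $\epsilon^2$. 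Consequently every entry $(\Delta_n^j)_{ij}$, being $\lambda_i^{(\cdot)}$ times $f_i$ or $g_i$, is $O(\epsilon)$ with an expansion in odd powers of $\epsilon$; this is what ultimately forces the \emph{gaps} in the orders $n_i\in\{1,1,3,3,5,5,\dots\}$ rather than $\{1,1,2,2,\dots\}$.

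Third comes the confluent-determinant step. I would temporarily replace the single $\epsilon$ by pairwise distinct small reals $\epsilon_1,\dots,\epsilon_N$, setting $\lambda_{2k-1}=\lambda_0+\epsilon_k^2$, $\lambda_{2k}=-\overline{\lambda}_{2k-1}$ — a genuine $n$-fold DT, so $r_n$ is still a solution. In $|\Delta_n^j|$ rows $2k-1$ and $2k$ both depend on $\epsilon_k$; factoring one $\epsilon_k$ out of each of the $2N$ rows extracts $\prod_k\epsilon_k^{2}$ and leaves a matrix whose odd-indexed rows are one fixed vector-function of $u_k:=\epsilon_k^2$, analytic and non-vanishing at $u_k=0$, and whose even-indexed rows are another. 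Letting $u_k\to0$ and invoking the standard confluent (generalized-Vandermonde) expansion, the reduced determinant is divisible by $\prod_{j<k}(u_k-u_j)^2$ and its regular part is the determinant whose $i$-th row is $\tfrac{1}{m!}\partial_u^{m}$ of the appropriate row-function at $u=0$, with $m=0,0,1,1,\dots,N-1,N-1$; translating $\partial_u$ back to $\partial_\epsilon$ and restoring the factored $\epsilon$ turns these into $\partial_\epsilon^{n_i}|_{\epsilon=0}$ with $n_i=i$ for $i$ odd and $n_i=i-1$ for $i$ even — that is, exactly $|\widehat{\Delta}_n^j|$ up to the scalar $\prod_k\epsilon_k^{2}\cdot\prod_{j<k}(u_k-u_j)^2\cdot\prod_i\tfrac{1}{n_i!}$. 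Since this scalar is the \emph{same} for $j=1,2,4$, it cancels in the quotients; passing to the limit in Theorem~\ref{thm_rn} then yields $r_{nr}=\lim_{\epsilon\to0}r_n=\dfrac{|\widehat{\Delta}_n^1|}{|\widehat{\Delta}_n^2|}\,r-2{\rm i}\,\dfrac{|\widehat{\Delta}_n^4|}{|\widehat{\Delta}_n^2|}$, and by continuity this limit is again a solution of \eqref{MNLS}, the announced $N$-th order RW.

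The main obstacle is the bookkeeping underlying the second and third steps: establishing that $\Phi(\lambda_0+\epsilon^2)$ is genuinely odd in $\epsilon$ (so $\epsilon^{-1}\Phi$ is analytic in $\epsilon^2$), which requires tracking the cancellation of the even-in-$\epsilon$ contributions of $\psi_1(\lambda_0+\epsilon^2)$ against those of $\overline{\psi_2(-\overline{\lambda}_0-\epsilon^2)}$ beyond the constant term; and then checking that the confluent limit really does peel off one and the same scalar prefactor from $|\Delta_n^1|$, $|\Delta_n^2|$ and $|\Delta_n^4|$ simultaneously, together with $|\widehat{\Delta}_n^2|\not\equiv0$ so the quotient is meaningful. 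Once these are in place the remainder is routine multilinear algebra and Taylor's theorem.
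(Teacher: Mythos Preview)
The paper does not actually prove Theorem~\ref{thm_rw}; it is stated as a direct application of the Taylor--expansion/coalescence technique, with the method attributed wholesale to \cite{He1,Guoo1,Guo2,Zhang1}, and then immediately used to generate explicit examples. Your proposal is therefore considerably more detailed than anything the paper supplies, and its architecture --- check the reduction pairing \eqref{choicesoflambdas}, identify $\lambda_0=\sqrt{(1-a)/2}+{\rm i}c/2$ as a simple zero of $S$, temporarily separate the $\epsilon_k$ so that Theorem~\ref{thm_rn} gives a genuine solution, and then pass to the confluent limit --- is exactly the route those references take.

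One refinement is needed in your second step. The eigenfunction $\Phi(x,t,\lambda_0+\epsilon^2)$ of \eqref{eigfunnozero} is \emph{not} literally odd in $\epsilon$: writing $\overline{\psi_2(-\overline{\lambda})}=C(\lambda,\sqrt S)\,\psi_1^{-}$ with $\psi_1^{\pm}$ the two $\sqrt S$-branches, one finds $C$ is not identically $-1$ (only $C|_{\epsilon=0}=-1$, together with the relation $C(\sqrt S)\,C(-\sqrt S)=1$). What does hold is $f(-\epsilon)=f(\epsilon)/C(\epsilon)$ and $g(-\epsilon)=g(\epsilon)/C(\epsilon)$ with the \emph{same} scalar $C$, and the conjugate relation for the even-indexed rows; this is enough, because it still forces each odd (resp.\ even) row at $\epsilon_j=-\epsilon_k$ to be a scalar multiple of the corresponding row at $\epsilon_k$, hence $|\Delta_n^{\,\cdot}|$ vanishes to second order at $\epsilon_j=\pm\epsilon_k$ and the confluent limit is legitimately taken in $u_k=\epsilon_k^2$, producing the derivative pattern $n_i\in\{1,1,3,3,\dots\}$. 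The common row scalars cancel between $|\widehat\Delta_n^1|$, $|\widehat\Delta_n^2|$, $|\widehat\Delta_n^4|$ just as you anticipate. You already flag precisely this verification as the main obstacle, so the strategy stands; only the ``cancel to all orders'' phrasing should be replaced by the scalar-twist version.
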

Note that $D$ is a constant in \eqref{eigfunnozero}, it is reasonable to choose $D=\exp({\rm i}\sqrt{S}
(\sum^{N-1}_{l=1}s_l\epsilon^{2l})$. Here, $D$ goes to $1$ when $\epsilon$ goes to zero.
Thus, there exist $N+1$ free parameters {}{($s_1,s_2,\cdots, s_{N-1}; a,c$)} in an  $N$-th order RW solution. Next, we derive
RWs with these parameters, and consider their dynamical evolution. For convenience, let $a=-1$
and $c=1$ in the following context.

For $N=2$, the second-order RW of the CLL-NLS is
\begin{equation}
  r_{4r}=\frac{L_{21}}{L_{22}}\exp(-{\rm i}\rho),
\end{equation}
where

\begin{align*}
  L_{21}=&125\,{x}^{6}+150\,{\rm i}{x}^{5}-750\,t{x}^{5}-285\,{x}^{4}+3375\,{t}^{2}{x}
^{4}-2100\,{\rm i}{x}^{4}t-156\,{\rm i}{x}^{3}-8500\,{t}^{3}{x}^{3}\\
&+2220\,t{x}^{3}+8100\,{\rm i}{t}^{2}{x}^{3}-24000\,{\rm i}{t}^{3}{x}^{2}-14850\,{t}^{2}{x}^{2}-
333\,{x}^{2}+16875\,{t}^{4}{x}^{2}+2160\,{\rm i}t{x}^{2}\\
&-270\,tx-18750\,{t}^{5}x+33750\,{\rm i}{t}^{4}x+28500\,{t}^{3}x-8100\,{\rm i}x{t}^{2}-90\,{\rm i}x+45+15625
\,{t}^{6}\\
&+1044\,{\rm i}t+600\,{\rm i}{t}^{3}-2205\,{t}^{2}-37500\,{\rm i}{t}^{5}-26625\,
{t}^{4}+(-300\,{x}^{3}-1800\,{\rm i}{x}^{2}+4500\,t{x}^{2}\\
&+1800\,{\rm i}tx+180\,x-4500\,{t}
^{2}x-4500\,{t}^{3}-432\,{\rm i}+540\,t+7200\,{\rm i}{t}^{2})s_1,\\
L_{22}=&-125\,{x}^{6}+750\,t{x}^{5}+150\,{\rm i}{x}^{5}-15\,{x}^{4}-3375\,{t}^{2}{x}
^{4}-600\,{\rm i}{x}^{4}t+180\,t{x}^{3}+8500\,{t}^{3}{x}^{3}\\
&+84\,{\rm i}{x}^{3}+2100\,{\rm i}{t}^{2}{x}^{3}+2250\,{t}^{2}{x}^{2}-3000\,{\rm i}{t}^{3}{x}^{2}-16875
\,{t}^{4}{x}^{2}-171\,{x}^{2}-360\,{\rm i}{x}^{2}t\\
&-4500\,{t}^{3}x+18750\,{t}^{5}x-900\,{\rm i}x{t}^{2}+3750\,{\rm i}{t}^{4}x+270\,tx+54\,{\rm i}x-15625\,{t}^{6}+144
\,{\rm i}t\\
&+3600\,{\rm i}{t}^{3}-3195\,{t}^{2}-10875\,{t}^{4}-9
+(300\,{x}^{3}-4500\,t{x}^{2}+4500\,{t}^{2}x-180\,x+1800\,{\rm i}tx\\
&+4500\,{t}^{3}-1800\,{\rm i}{t}^{2}-72\,{\rm i}+3060\,t)s_1,
\end{align*}
where $s_1$ is a free complex parameter. The maximum amplitude of $|r_{4r}|$ is equal to $5$ when $s_1=0$, which
is reached at $(x=0,\,t=0)$. This solution is shown in Fig. \ref{fig_2rw}(a).  Allocating different values to
$s_1$, we can obtain RWs which {}{are} distinct from the above one. For example, RWs with $s_1=100-100{\rm i}$
and $s_1=100+100{\rm i}$ are shown in Fig. \ref{fig_2rw}(b) and Fig. \ref{fig_2rw}(c), respectively. Both of them
possess three intensity peaks, located at different time and space values. Each peak is similar to a first-order RW, shown in Fig. \ref{fig_1rw}(a). Moreover, solution $r_{4r}$ in Fig. \ref{fig_2rw}(b) is different from the one in Fig. \ref{fig_2rw}(c), since  three peaks in each solution are arrayed in different directions.

For $N=3$, according to theorem \ref{thm_rw}, we can obtain the third-order RW solution of the CLL-NLS
equation. However, its expression, with two {}{non-zero} parameters $s_1$ and $s_2$, is very cumbersome, that we just provide the exact expression
in the case $s_1=s_2=0$, i.e.
\begin{equation}
  r_{6r}=\frac{L_{31}}{L_{32}}\exp(-{\rm i}\rho),
\end{equation}
 with
  \begin{align*}
    L_{31}=&-3125\,{x}^{12}+37500\,t{x}^{11}-7500\,{\rm i}{x}^{11}-281250\,{t}^{2}{x}^{
           10}+150000\,{\rm i}t{x}^{10}+18750\,{x}^{10}+1437500\,{t}^{3}{x}^{9}\\
           &-1237500\,{\rm i}{t}^{2}{x}^{9}+22500\,{\rm i}{x}^{9}-322500\,t{x}^{9}+3386250\,{t}^{2}{x}^{8}+6975000\,{\rm i}{t}^{3}{x}^{8}-5671875\,{t}^{4}{x}^{8}\\
           &-495000\,{\rm i}t{x}^{8}+31725\,{x}^{8}+5130000\,{\rm i}{t}^{2}{x}^{7}+63720\,{\rm i}{x}^{7}+17475000\,{t}^{5}{x}^{7}+84600\,t{x}^{7}\\
           &-19890000\,{t}^{3}{x}^{7}-27975000\,{\rm i}{t}^{4}{x}^{7}+87240000\,{\rm i}{t}^{5}{x}^{6}+82807500\,{t}^{4}{x}^{6}+116676\,{x}^{6}-858240\,{\rm i}t{x}^{6}\\
           &-33000000\,{\rm i}{t}^{3}{x}^{6}-3338100\,{t}^{2}{x}^{6}-43637500\,{t}^{6}{x}^{6}+2230200\,{\rm i}{t}^{2}{x}^{5}+87375000\,{t}^{7}{x}^{5}\\
           &+26829000\,{t}^{3}{x}^{5}+31320\,t{x}^{5}+133155000\,{\rm i}{t}^{4}{x}^{5}-209775000\,{\rm i}{t}^{6}{x}^{5}-245835000\,{t}^{5}{x}^{5}+129384\,{\rm i}{x}^{5}\\
           &+401250000\,{\rm i}{t}^{7}{x}^{4}+550312500\,{t}^{6}{x}^{4}-146475\,{x}^{4}-1717200\,{\rm i}t{x}^{4}-366750000\,{\rm i}{t}^{5}{x}^{4}\\
           &-93881250\,{t}^{4}{x}^{4}-607500\,{t}^{2}{x}^{4}-141796875\,{t}^{8}{x}^{4}-3330000\,{\rm i}{t}^{3}{x}^{4}+1315980\,t{x}^{3}\\
           &-52380\,{\rm i}{x}^{3}-585937500\,{\rm i}{t}^{8}{x}^{3}+696450000\,{\rm i}{t}^{6}{x}^{3}+4222800\,{\rm i}{t}^{2}{x}^{3}+6975000\,{\rm i}{t}^{4}{x}^{3}\\
           &+179687500\,{t}^{9}{x}^{3}-902250000\,{t}^{7}{x}^{3}+209925000\,{t}^{5}{x}^{3}+9702000\,{t}^{3}{x}^{3}
           -846000000\,{\rm i}{t}^{7}{x}^{2}\\
           &-12429450\,{t}^{2}{x}^{2}-175781250\,{t}^{10}{x}^{2}+656250000\,{\rm i}{t}^{9}{x}^{2}+1078593750\,{t}^{8}{x}^{2}-19872000\,{\rm i}{t}^{3}{x}^{2}\\
           &+126900000\,{\rm i}{t}^{5}{x}^{2}-332212500\,{t}^{6}{x}^{2}+17347500\,{t}^{4}
           {x}^{2}-62370\,{x}^{2}+887760\,{\rm i}t{x}^{2}\\
           &+117187500\,{t}^{11}x-11340\,{\rm i}x+25825500\,{t}^{3}x-299475000\,{\rm i}{t}^{6}x-286740\,tx+52245000\,{\rm i}{t}^{4}x\\
           &+621562500\,{\rm i}{t}^{8}x-5046300\,{\rm i}x{t}^{2}-492187500\,{\rm i}{t}^{10}x+313875000\,{t}^{7}x-128925000\,{t}^{5}x\\
           &-839062500\,{t}^{9}x-190350000\,{\rm i}{t}^{5}-46875000\,{\rm i}{t}^{9}-715230\,{t}^{2}-48828125\,{t}^{12}+2835\\
           &-5761800\,{\rm i}{t}^{3}+222328125\,{t}^{8}-47833875\,{t}^{4}+53550000\,{\rm i}{t}^{7}+234375000\,{\rm i}{t}^{11}+363281250\,{t}^{10}\\
           &+291937500\,{t}^{6}+158760\,{\rm i}t,\\
L_{32}=&-3125\,{x}^{12}+37500\,t{x}^{11}+7500\,{\rm i}{x}^{11}-75000\,{\rm i}t{x}^{10}+3750\,{x}^{10}-281250\,{t}^{2}{x}^{10}+487500\,{\rm i}{t}^{2}{x}^{9}\\
&+7500\,{\rm i}{x}^{9}+1437500\,{t}^{3}{x}^{9}-22500\,t{x}^{9}-5671875\,{t}^{4}{x}^{8
}-13275\,{x}^{8}-90000\,{\rm i}t{x}^{8}+461250\,{t}^{2}{x}^{8}\\
&-2100000\,{\rm i}{t}
^{3}{x}^{8}-3090000\,{t}^{3}{x}^{7}+6975000\,{\rm i}{t}^{4}{x}^{7}+84600\,t{
x}^{7}+37080\,{\rm i}{x}^{7}+17475000\,{t}^{5}{x}^{7}\\
&-90000\,{\rm i}{t}^{2}{x}^{7}
-998100\,{t}^{2}{x}^{6}-43637500\,{t}^{6}{x}^{6}+13057500\,{t}^{4}{x}^
{6}-17490000\,{\rm
 i}{t}^{5}{x}^{6}-239760\,{\rm i}t{x}^{6}\\
&+2100000\,{\rm i}{t}^{3}{x}^{
6}-81324\,{x}^{6}+34875000\,{\rm i}{t}^{6}{x}^{5}+4509000\,{t}^{3}{x}^{5}+
97848\,{\rm i}{x}^{5}-35955000\,{t}^{5}{x}^{5}\\
&+489240\,t{x}^{5}+1290600\,{\rm i}{t
}^{2}{x}^{5}+87375000\,{t}^{7}{x}^{5}-10215000\,{\rm i}{t}^{4}{x}^{5}-
3469500\,{t}^{2}{x}^{4}+11205\,{x}^{4}\\
&+28800000\,{\rm i}{t}^{5}{x}^{4}-
7031250\,{t}^{4}{x}^{4}-52500000\,{\rm i}{t}^{7}{x}^{4}+62062500\,{t}^{6}{x}
^{4}-141796875\,{t}^{8}{x}^{4}\\
&-3780000\,{\rm i}{t}^{3}{x}^{4}-216000\,{\rm i}t{x}^
{4}-48450000\,{\rm i}{t}^{6}{x}^{3}+961200\,{\rm i}{t}^{2}{x}^{3}+106380\,t{x}^{3}
+26460\,{\rm i}{x}^{3}\\
&+60937500\,{\rm i}{t}^{8}{x}^{3}-3375000\,{\rm i}{t}^{4}{x}^{3}+
1125000\,{t}^{5}{x}^{3}-62250000\,{t}^{7}{x}^{3}+9702000\,{t}^{3}{x}^{
3}+179687500\,{t}^{9}{x}^{3}\\
&+5724000\,{\rm i}{t}^{3}{x}^{2}-73102500\,{t}^{4
}{x}^{2}-87480\,{\rm i}t{x}^{2}-46875000\,{\rm i}{t}^{9}{x}^{2}-158512500\,{t}^{6}
{x}^{2}+67500000\,{\rm
 i}{t}^{7}{x}^{2}\\
&-20250\,{x}^{2}+1891350\,{t}^{2}{x}^{
2}-175781250\,{t}^{10}{x}^{2}-18281250\,{t}^{8}{x}^{2}+66150000\,{\rm i}{t}^
{5}{x}^{2}+4860\,{\rm i}x\\
&+11340\,tx+23437500\,{\rm i}{t}^{10}x+98437500\,{t}^{9}x-
45225000\,{\rm i}{t}^{6}x-882900\,{\rm i}{t}^{2}x-526500\,{t}^{3}x\\
&+146475000\,{t}^
{5}x+13095000\,{\rm i}{t}^{4}x+117187500\,{t}^{11}x-47812500\,{\rm i}{t}^{8}x+
331875000\,{t}^{7}x\\
&-345796875\,{t}^{8}-109912500\,{t}^{6}-405-8100000
\,{\rm i}{t}^{5}+32400\,{\rm i}t-48828125\,{t}^{12}-990630\,{t}^{2}\\
&-24883875\,{t}^
{4}+56250000\,{\rm i}{t}^{9}-152343750\,{t}^{10}+52200000\,{\rm i}{t}^{7}+4276800
\,{\rm i}{t}^{3}.
  \end{align*}

\begin{figure}[!htbp]
\centering
\raisebox{20 ex}{$|r_{2r}|$}\subfigure[]{\includegraphics[height=6cm,width=6cm]{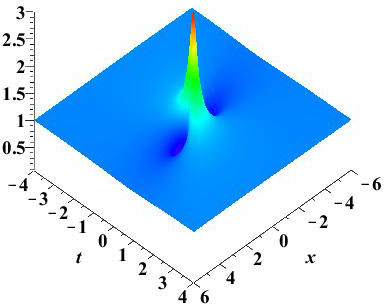}}
\qquad
\raisebox{20 ex}{$|r_{2r}|$}\subfigure[]{\includegraphics[height=6cm,width=6cm]{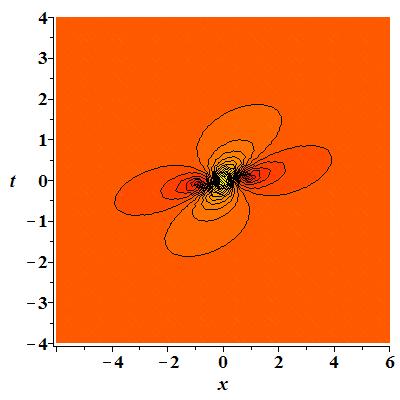}}
\caption{(Color online) The first-order RW solution of the CLL-NLS with parameters: $\alpha_1=-1,\,\beta_1=0.5$. The right panel is the density plot of the left.}\label{fig_1rw}
\end{figure}

\begin{figure}[!htbp]
\centering
\raisebox{20 ex}{$|r_{4r}|$}\subfigure[]{\includegraphics[height=6cm,width=6cm]{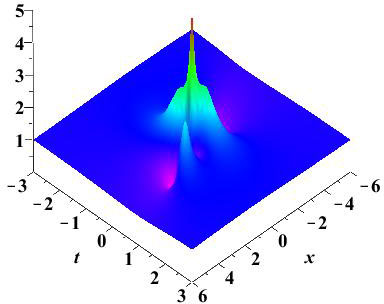}}
\qquad
\raisebox{20 ex}{$|r_{4r}|$}\subfigure[]{\includegraphics[height=6cm,width=6cm]{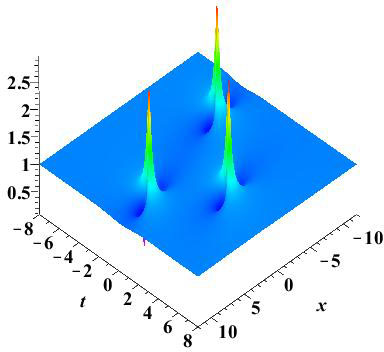}}
\qquad
\raisebox{20 ex}{$|r_{4r}|$}\subfigure[]{\includegraphics[height=6cm,width=6cm]{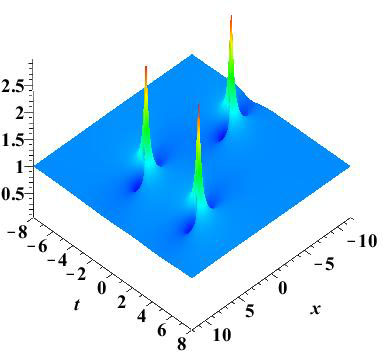}}
\caption{(Color online) The second-order RW of the CLL-NLS with parameters: (a) $s_1=0$; (b) $s_1=100-100{\rm i}$; (c) $s_1=100+100{\rm i}$.}\label{fig_2rw}
\end{figure}

The maximum amplitude is equal to $7$ which occurs at $(0,\,0)$, its profile is shown in Fig. \ref{fig_3rw}(a). Let $s_1\neq0$ and $s_2\neq0$, we obtain {}{other} solutions which are different from the one given in Fig. \ref{fig_3rw}(a). In each of these solutions, the third-order RW is split into six intensity peaks which are similar to a first-order RW. These six peaks, located at different point of time and space, make up different profiles. As example, three such solutions are displayed in Fig. \ref{fig_3rw}(b-d) with $(s_1,\, s_2)=(100,\,0)$,  $(s_1,\,s_2)=(0,\,5000)$, and $(s_1,\,s_2)=(100,\,13000)$, respectively. In Fig. \ref{fig_3rw}(b), these six intensity form a triangle. In Fig. \ref{fig_3rw}(c), they compose a pentagon with five peaks locating on the shell and the other one locating on
 the center. In Fig. \ref{fig_3rw}(d), three peaks compose a triangle and the other three peaks compose a  part of a circular arc.

\begin{figure}[!htbp]
\centering
\raisebox{20 ex}{$|r_{6r}|$}\subfigure[]{\includegraphics[height=6cm,width=6cm]{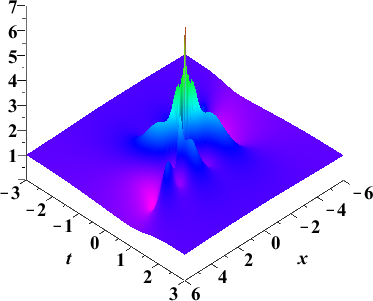}}
\qquad
\raisebox{20 ex}{$|r_{6r}|$}\subfigure[]{\includegraphics[height=6cm,width=6cm]{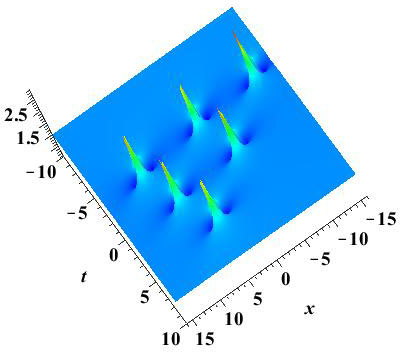}}
\\
\raisebox{20 ex}{$|r_{6r}|$}\subfigure[]{\includegraphics[height=6cm,width=6cm]{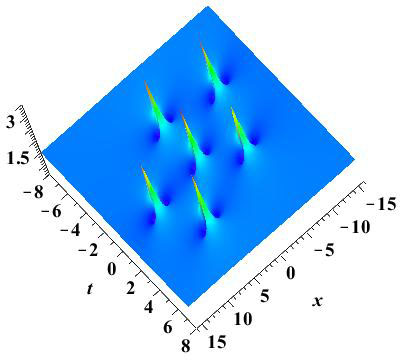}}
\qquad
\raisebox{20 ex}{$|r_{6r}|$}\subfigure[]{\includegraphics[height=6cm,width=6cm]{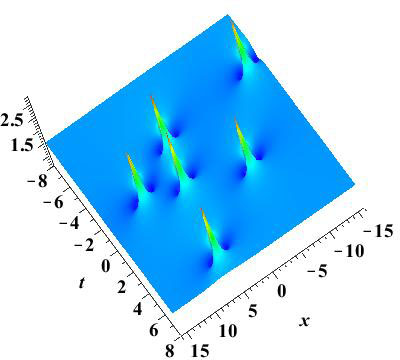}}
\caption{(Color online) The third-order RW of the CLL-NLS with parameters $(s_1,\,s_2)$ as: (a) $(0,0)$; (b) $(100,0)$; (c) $(0,5000)$; (d) $(100,13000)$.}\label{fig_3rw}
\end{figure}

Let $N=4$ in theorem \ref{thm_rw}, then $r_{8r}$ gives {}{a} fourth-order RW of the CLL-NLS with three
parameters $s_1,\,s_2$ and $s_3$. Let $(s_1,\,s_2,\,s_3)=(0,\,0,\,0)$, $r_{8r}$ leads {}{to} a solution {}{with a  highest peak
surrounded by several gradually decreasing peaks in two sides along $t$-direction, which is the fundamental pattern} and is shown in Fig \ref{fig_4rw}(a). The amplitude of this solution is $9$ located at the origin of coordinate. Furthermore, allocating different values to $(s_1,\,s_2,\,s_3)$, we obtain  a hierarchy of solutions, which have a triangle pattern, a pentagon pattern, a circular pattern with a inner second-order fundamental pattern or triangle pattern.
 These solutions are shown in Fig. \ref{fig_4rw}(b-c) and Fig. \ref{fig_4rw_1}.

\begin{figure}[!htbp]
\centering
\raisebox{20 ex}{$|r_{8r}|$}\subfigure[]{\includegraphics[height=6cm,width=6cm]{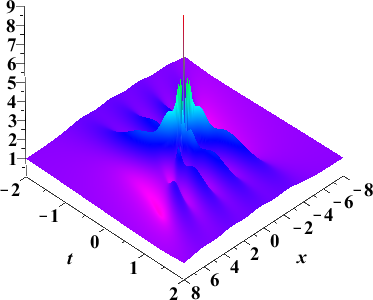}}
\qquad
\raisebox{20 ex}{$|r_{8r}|$}\subfigure[]{\includegraphics[height=6cm,width=6cm]{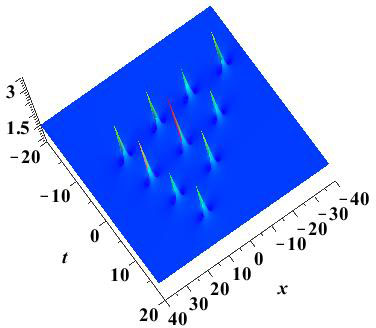}}
\qquad
\raisebox{20 ex}{$|r_{8r}|$}\subfigure[]{\includegraphics[height=6cm,width=6cm]{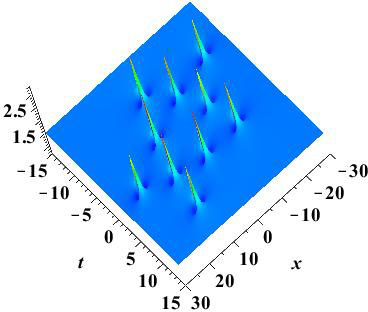}}
\caption{(Color online) The fourth-order RW of the CLL-NLS with parameters $(s_1,\,s_2,\,s_3)$ as: (a) $(0,0,0)$; (b) $(500,0,0)$; (c) $(0,50000,0)$.}\label{fig_4rw}
\end{figure}

\begin{figure}[!htbp]
\centering
\raisebox{20 ex}{$|r_{8r}|$}\subfigure[]{\includegraphics[height=6cm,width=6cm]{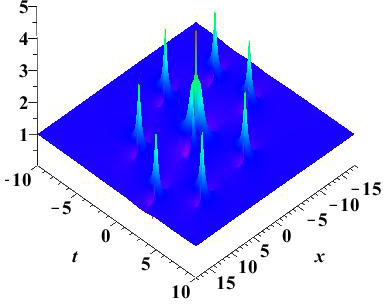}}
\qquad
\raisebox{20 ex}{$|r_{8r}|$}\subfigure[]{\includegraphics[height=6cm,width=6cm]{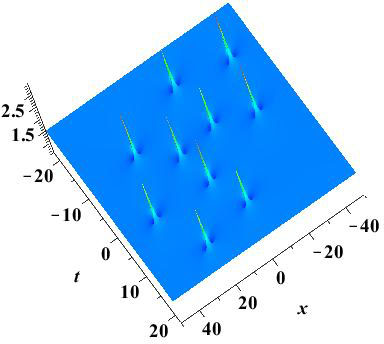}}
\caption{(Color online) The fourth-order RW in circular pattern of the CLL-NLS with parameters $(s_1,\,s_2,\,s_3)$ as: (a) $(0,0,500000)$; (b) $(500,0,50000000)$.}\label{fig_4rw_1}
\end{figure}

All the results are derived as a consequence of theorem \ref{thm_rw} and can be trivially extended to the higher-order RWs. That is, the explicit
expressions of other higher-order solutions can be obtained in a straightforward manner. However, we will omit this, since expressions are too
cumbersome to be explicitly written here.  All solutions, presented above, have been verified analytically by symbolic computation through a Maple computer software.

\section{Localization characters of CLL-NLS rogue waves}
In this section, we consider the localization characters of the RW of the CLL-NLS as well as the {influence} of SSE on these localization. First, we need to define the length and width of
the RW solution as described in \cite{PRE1}. In order to compare the latter properties with localization of NLS RWs \cite{He1,PRE1}, we replace the parameters $\alpha_1$
 and $\beta_1$ with $a$ and $c$ in \eqref{1rw}. That is, we substitute $\alpha_1=\sqrt{\frac{1-a}{2}}$ $(a<1)$ and
 $\beta_1=\frac c2$ into \eqref{1rw}. In this case, the first-order RW {}{of the CLL-NLS} is expressed as {}{the following}
\begin{equation}
  r_{2r}=\frac{L_n}{L_d}c\exp(-{\rm i}\rho),
\end{equation}
with
\begin{align*}
  L_n=&3-{c}^{4}{x}^{2}-4\,{t}^{2}{c}^{4}-4\,{c}^{6}{t}^{2}-{c}^{8}{t}^{2}-2
\,{c}^{6}xt+8\,{\rm i}t{c}^{2}-12\,{\rm i}{c}^{2}ta-4\,{c}^{4}xt+8\,{c}^{2}{t}^{2}
{a}^{3}-2\,{c}^{2}{x}^{2}\\
      &+2\,a{c}^{2}{x}^{2}-8\,{c}^{2}xta-2\,{\rm i}{c}^{2}
x-8\,{c}^{2}{t}^{2}{a}^{2}+8\,{c}^{2}xt{a}^{2},\\
  L_d=&-1-{c}^{4}{x}^{2}-4\,{t}^{2}{c}^{4}-4\,{c}^{6}{t}^{2}-{c}^{8}{t}^{2}-2
\,{c}^{6}xt+4\,{\rm i}{c}^{2}ta+4\,{\rm i}{c}^{4}t-4\,{c}^{4}xt+8\,{c}^{2}{t}^{2}{
a}^{3}-2\,{c}^{2}{x}^{2}\\
      &+2\,a{c}^{2}{x}^{2}-8\,{c}^{2}xta+2\,{\rm i}{c}^{2}x
-8\,{c}^{2}{t}^{2}{a}^{2}+8\,{c}^{2}xt{a}^{2}.
\end{align*}

 As {}{it is} known, there exist two holes near the peak in the first-order RW.
These two holes are located at $P_1=({\frac {-18a+12}{\sqrt {-24\,a+24+3\,{c}^{2}} \left( 2\,a-2-{c}^{2}
 \right) c}},\,{\frac {3}{\sqrt {-24\,a+24+3\,{c}^{2}} \left( 2\,a-2-{c}^{2}
 \right) c}})$ and $P_2=({\frac {18a-12}{\sqrt {-24\,a+24+3\,{c}^{2}} \left( 2\,a-2-{c}^{2}
 \right) c}},$ $\,{\frac {-3}{\sqrt {-24\,a+24+3\,{c}^{2}} \left( 2\,a-2-{c}^{2}
 \right) c}})$.  It is obvious that $P_1$ and $P_2$ are on the line $l_1:\,x=-2(3a-2)t$. On the background plane
  with {}{height} $|r_{2r}|=c$,  the contour line is a hyperbola
\begin{align}\label{hyperbola}
  &\left( 4\,{c}^{6}-4\,{c}^{4}-8\,{c}^{2}{a}^{3}+3\,{c}^{8}+24\,a{c}^{4
}-16\,{a}^{2}{c}^{4}+8\,{c}^{2}{a}^{2}+4\,a{c}^{6} \right) {t}^{2}+
 \left( -4\,a{c}^{4}+4\,{c}^{6}+8\,{c}^{4}+8\,{c}^{2}a\right. \nonumber \\
 &\left.-8\,{c}^{2}{a}^{2} \right) xt-1+ \left( {c}^{4}+2\,{c}^{2}-2\,{c}^{2}a \right) {x}^{2}=0,
\end{align}
which intersects with the line $l_1$ at two points $P_3=({\frac {6a-4}{\sqrt {-8\,a+3\,{c}^{2}+8} \left( 2\,a-2-{c}^{2}
\right) c}},\,-{\frac {1}{\sqrt {-8\,a+3\,{c}^{2}+8}
\left( 2\,a-2-{c}^{2} \right) c
}})$ and $P_4=({-\frac {6a-4}{\sqrt {-8\,a+3\,{c}^{2}+8} \left( 2\,a-2-{c}^{2} \right) c}},\,{\frac {1}
{\sqrt {-8\,a+3\,{c}^{2}+8} \left( 2\,a-2-{c}^{2} \right)c}})$. We define the tangential direction of
hyperbola to be at two points $P_3$ and $P_4$, which is the length-direction, {}{as described  by a line $l_2$ : $x=-(2a+3/2c^2)t$}. The {}{density  plot for $|r_{2r}|^2$} {}{combined} with the hyperbola and the length-direction is displayed in Fig. \ref{contour}.

\begin{figure}[!htbp]
\centering
\raisebox{20 ex}{(a)}\subfigure{\includegraphics[height=6cm,width=6cm]{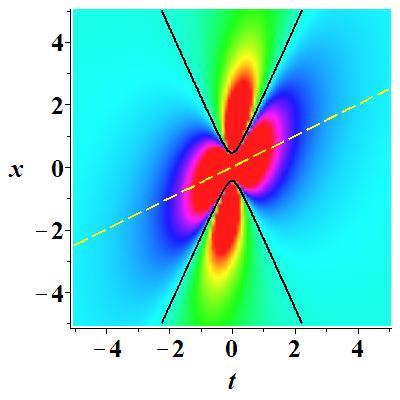}}
\qquad
\raisebox{20 ex}{(b)}\subfigure{\includegraphics[height=6cm,width=6cm]{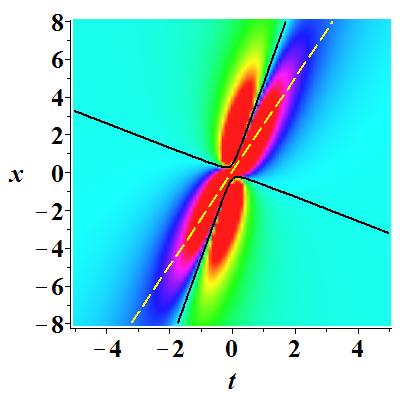}}
\caption{(Color online) The density plots of the first-order RW $|r_{2r}|^2$ with hyperbola($|r_{2r}|^2=c^2$) and length-direction. (a) $a=-1,\, c=1$. (b) $a=-2,\,c=1$. The black solid line is hyperbola, the yellow dash line is the length-direction.}\label{contour}
\end{figure}

Since the contour line is not closed on the background in the length-direction,
we have to {}{select a contour $|r_{2r}|^2-2c^2=0$ with height twice the background such that it is closed}. The closed contour is useful to discuss the localization characters of the the first-order RW. {}{It} intersects with the length-direction at two points. We define the distance $d^L$ of these two points as the length of the first-order RW, and we determine the projection $d^W$ of $|P_1P_2|$ on the width-direction, which is perpendicular to the length-direction,
{}{to be} the width of the first-order RW. Through a simple calculation, we {}{obtain}
\begin{equation}\label{length1}
  d^L=\frac{{d^L}_n}{{d^L}_d},
\end{equation}
with
\begin{align*}
{d^L}_n=&2\,\sqrt{\left(4+16\,{a}^{2}+24\,{c}^{2}a+
9\,{c}^{4}\right)\left( 48+9\,{c}^{4}+48\,{a}^{2}+46\,{c}^{2}-96\,a-46\,{
c}^{2}a+2\,\sqrt {M_{{1}}} \right)},\\
{d^L}_d=&( 8\,a-8-{c}^{2})( 2\,a-2-{c}^{2})c^2,\\
M_1=&1024-4096\,a+22\,{c}^{8}-4992\,{c}^{2}a+1024\,{a}^{4}+242\,{c}^{6}-
4096\,{a}^{3}-1664\,{c}^{2}{a}^{3}-242\,a{c}^{6}\\
&+976\,{a}^{2}{c}^{4}-1952\,a{c}^{4}+4992\,{c}^{2}{a}^{2}+1664\,{c}^{2}+6144\,{a}^{2}+976\,{
c}^{4},
\end{align*}
and
\begin{equation}\label{width1}
  d^W=\frac{{d^W}_n}{{d^W}_d},
\end{equation}
while
\begin{align*}
  {d^W}_n=6(8a-3c^2-8), \quad
  {d^W}_d=\sqrt{\left(-24a+24+3c^2\right)\left(4+16a^2+24c^2+9c^4\right)}\left(2a-2-c^2\right)c.
\end{align*}
The length $d^L$ and width $d^W$ are related to $a$ and $c$, and their profiles are plotted in Fig. \ref{dLW}.

\begin{figure}[!htbp]
\centering
\raisebox{20 ex}{(a)}\subfigure{\includegraphics[height=6cm,width=6cm]{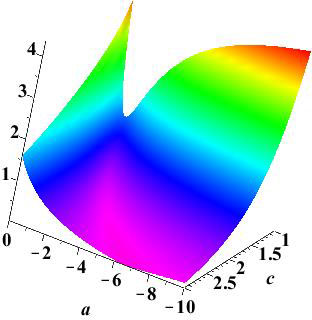}}
\qquad
\raisebox{20 ex}{(b)}\subfigure{\includegraphics[height=6cm,width=6cm]{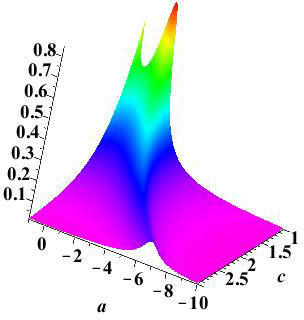}}
\caption{(Color online) {}{The localization characters of the first-order RW of the CLL-NLS, which are two functions of $a$ and $c$.}
(a) The length $d^L$. (b)The width $d^W$. }\label{dLW}
\end{figure}

 If one fixes the parameter $c$, the length decreases with the increase of $a$ at first and then increases
 until $a=1$. On the other hand, the width increases first, decreases, and then increases {}{again} until $a=1$. For example,
 when $c=1$, the length decreases with $a$ if $a\in(-\infty,-0.88)$ and increases with $a$ if $a\in(-0.88,1)$. At the same
 time, the width increases with $a$ if $a\in(-\infty,-0.69)$ or $a\in(0.52,1)$ and decreases with $a$ if $a\in(-0.69,0.52)$.
 Furthermore, when $a$ tends to $-\infty$, $d^L$ tends to $2\sqrt7$ and $d^W$ tends to $0$, $d^L$ reaches to the minimum
  $1.32$ when $a=-0.88$ and gets to the maximum $62.42$ when $a\rightarrow1$,  and $d^W$ reaches to the maximum $1.70$
  when $a=-0.69$.  In order to provide a visual support of above analysis on the trend {}{with respect to} $a$ of two localization characters of the  RW for the CLL-NLS, two curves for $d^L$ and $d^W$  with fixed $c=1$ are given in Fig. \ref{ld2d}(a).

In order to consider the {}{contribution} of the SSE on the localization characters of the RW, we define the length and width
 of the RW as mentioned above for the NLS {}{${\rm i}r_t+r_{xx}+|r|^2r=0$, which is {}{trivially} given by ignoring the SSE term in the
 CLL-NLS}. After applying a scaling transformation, due to the the different coefficient of nonlinear term, the first-order RW of the NLS
 can be obtained from the results{}{, reported in} \cite{He1}. Then, the length and width of the first-order RW of the NLS are expressed by
\begin{equation}
  {d^L}_{NLS}=\frac{\sqrt{7(1+4a^2)}}{2c^2},\qquad {d^W}_{NLS}=\frac{\sqrt{3}}{(1+4a^2)c},
\end{equation}
and the length direction is described by a line $l_{2NLS}$: $x=2at$.

Set $c=1$, then ${d^L}_{NLS}$ and ${d^W}_{NLS}$ reach the minimum $\frac{\sqrt7}2$ and the maximum $\sqrt3$ at $a=0$, respectively.
It implies that the maximum of width and the minimum of length of the RW for the NLS are roughly equal to the corresponding
values of the RW for the CLL-NLS. The width of the RW for the NLS also tends to $0$ when $a\rightarrow\pm\infty$.
 However, the length tends to $+\infty$, when $a\rightarrow\pm\infty$. There is no oscillation interval for the width of the RW solution
  of the NLS. This is different from the analogous CLL-NLS. The profiles of ${d^L}_{NLS}$ and ${d^W}_{NLS}$ with $c=1$
  are given in Fig. \ref{ld2d}(b).
	
\begin{figure}[!htbp]
\centering
\raisebox{20 ex}{(a)}\subfigure{\includegraphics[height=6cm,width=6cm]{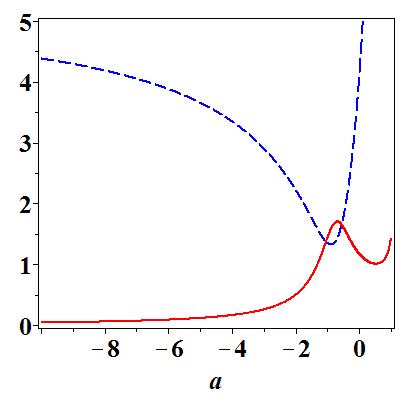}}
\qquad
\raisebox{20 ex}{(b)}\subfigure{\includegraphics[height=6cm,width=6cm]{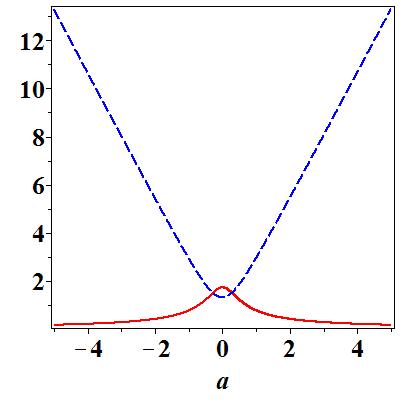}}
\caption{(Color online) The length $d^L$ and the width $d^W$ for the first-order RW with c=1. The blue dash line indicates the length,
 and the red solid line denotes the width. (a) The CLL-NLS. (b) The NLS.}\label{ld2d}
\end{figure}
	
Furthermore, we notice that $d^L<{d^L}_{NLS}$ if $a\in(-\infty,\, -0.47)$ and $d^L>{d^L}_{NLS}$ if $a\in(-0.47,1)$, $d^W<{d^W}_{NLS}$, if
$a\in(-\infty,\, -2.53)$ or $a\in(-0.33,\,0.67)$, and $d^W>{d^W}_{NLS}$ if $a\in(-2.53,\, -0.33)$ or $a\in(0.67,\,1)$
in the case of $c=1$. These detailed comparisons on localization characters of the first-order RWs are
given in table 1 and Fig.\ref{wdcomparsion}.

\begin{figure}[!htbp]
\centering
\raisebox{20 ex}{(a)}\subfigure{\includegraphics[height=6cm,width=6cm]{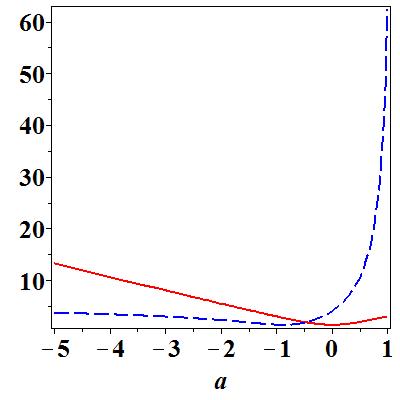}}
\qquad
\raisebox{20 ex}{(b)}\subfigure{\includegraphics[height=6cm,width=6cm]{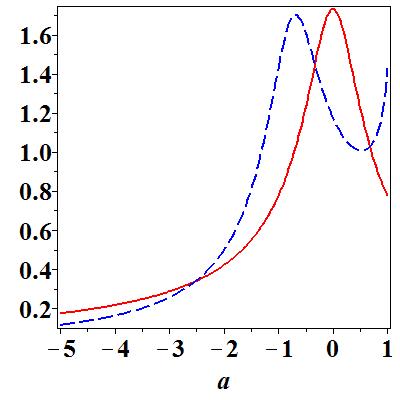}}
\caption{(Color online) The comparison between length (a) and width (b) of the first-order  RWs
for the NLS  equation(red,solid) and the CLL-NLS(blue, dash) with $a<1$ and $c=1$.
The left panel has one intersection point ($-0.47, 1.82$).
There are three  intersection points in the right panel: ($-2.53, 0.34$), ($-0.33,1.45$),($0.67,1.03$). }\label{wdcomparsion}
\end{figure}

 This analysis is visually verified by contours of $|r|^2$ for heights, being twice higher than the background in Fig.\ref{ld3d}. Furthermore, since the length and the width of the first-order RW of CLL-NLS are smaller than the corresponding NLS case when $a<-2.53$, respectively, the latter is therefore \textit{better} than the corresponding NLS one. {}{From an experimental point of view, having a smaller localization, we emphasize therefore a simpler set-up, since the proapgating distance of first-order CLL-NLS RW is considerably smaller compared to the NLS case.} The opposite case is valid for $-0.47<a<-0.33$ and $0.67< a<1$, where the CLL-NLS RW is \textit{worse}. Unfortunately, we have not been able to compare the localization properties, when $a$ belongs to one of the other two intervals, shown in the third and fifth column of table $1$. This is due to the fact that the width and length of the corresponding localization is alternatively smaller or bigger for the CLL-NLS compared to the NLS. In other words, the SSE in the CLL-NLS gives a remarkable change of the localization properties of the first-order RW, although we are not able to claim, if the RW localization for this equation is rather improved or destroyed by this term at different points $(a,c)$, in the parameter space. This is the first impact of the SSE on RW solutions of the CLL-NLS. As second impact we emphasize is that the SSE induces a strong rotation of the direction length on the RW of the CLL-NLS by comparing the two lines $l_2$ and $l_{2NLS}$. These two impacts are demonstrated intuitively by contours at a height $2c^2$ of the modulus square for the first-order RWs of the CLL-NLS and of the NLS in Fig. \ref{ld3d}.

\begin{figure}[!htbp]
\centering
\raisebox{20 ex}{(a)}\subfigure{\includegraphics[height=4.5cm,width=4.5cm]{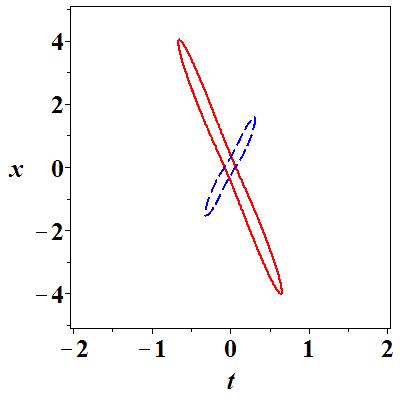}}
\qquad
\raisebox{20 ex}{(b)}\subfigure{\includegraphics[height=4.5cm,width=4.5cm]{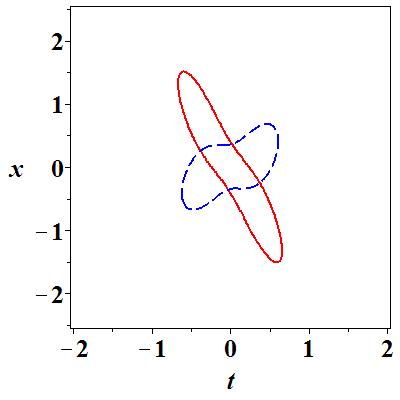}}
\\
\raisebox{20 ex}{(c)}\subfigure{\includegraphics[height=4.5cm,width=4.5cm]{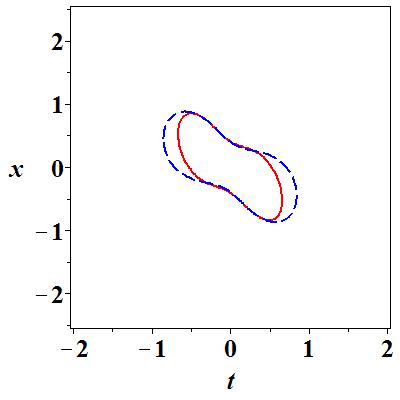}}
\raisebox{20 ex}{(d)}\subfigure{\includegraphics[height=4.5cm,width=4.5cm]{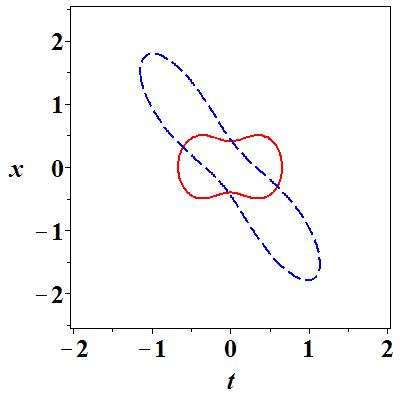}}
\raisebox{20 ex}{(e)}\subfigure{\includegraphics[height=4.5cm,width=4.5cm]{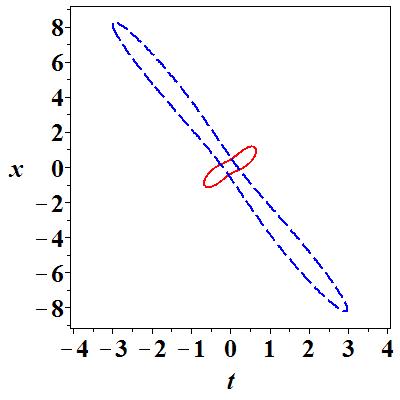}}
\caption{(Color online) The contours of first-order RWs  at height($2c^2$) twice background with $c=1$ and different values of $a$.
The red solid line indicates the NLS and the blue dash line denotes the CLL-NLS.
(a) $a=-3$, (b) $a=-1$, (c)$a=-0.4$, (d)$a=0$, (e)$a=0.7$.}\label{ld3d}
\end{figure}

\small{
\begin{table}[t!]
\label{ttt1}
\begin{center}
  {The localization characters for the RW  in the  NLS and CLL-NLS}
\end{center}
\vspace{2 ex}
\centering
\begin{tabular}{|c|c|c|c|c|c|c|}
\hline
Values of $a$     & $a<-2.53$        & $-2.53<a<-0.47$          &$-0.47<a<-0.33$    &$0.33<a<0.67$ &$0.67<a<1$\\\hline
Length    & $d^L_{NLS}>d^L$   & $d^L_{NLS}>d^L$       &$d^L_{NLS}<d^L$    &$d^L_{NLS}<d^L$ &$d^L_{NLS}<d^L$\\\hline
Width      & $d^W_{NLS}>d^W$  &$d^W_{NLS}<d^W$        &$d^W_{NLS}<d^W$    &$d^W_{NLS}>d^W$ &$d^W_{NLS}{}{<}d^W$\\\hline
Localization & NLS$<$CLL-NLS    & {}{Indeterminate}         &NLS$>$CLL-NLS        &{}{Indeterminate}  &NLS$>$CLL-NLS\\
\hline
\end{tabular}
\vspace{2 ex}
\caption{NLS$<$CLL-NLS means the localization
of RW in the CLL-NLS is \textit{better} than NLS, since the width and length of the CLL-NLS RW is smaller compared to the NLS, respectively. NLS$>$CLL-NLS is the opposite case.}
\end{table}}

\section{Conclusions}
We have shown that exact fundamental and higher-order solution of the CLL-NLS can be constructed, using the DT. These solutions {}{may} describe the accurate propagation of localized structures in nonlinear dispersive media, since dispersion, nonlinearity and SSE have been taken into account. In particular, we provide exact analytical expressions for doubly-localized RW solutions. Furthermore, we discuss the influence of the SSE on the {}{localization characteristics} of NLS RWs using visualization contour method. This work may motivate similar studies for higher-order evolution equation of this kind, such for higher-order generalized nonlinear Schr\"odinger-type equations. In particular, experiments in several nonlinear dispersive media, such in nonlinear optical fibers or in water wave flumes may be a consequence of these studies.

\mbox{\vspace{1cm}}


{\bf Acknowledgments}  This work is supported by the NSF of China under Grant No.11271210, the K. C. Wong Magna Fund in Ningbo University. J. S. H acknowledges sincerely Prof. A. S. Fokas for arranging the visit to Cambridge University in 2012-2014 and for many useful discussions. A. C. acknowledges support from the Isaac Newton Institute for Mathematical Sciences.



\begin{thebibliography}{200}

\bibitem{Benney1}Benney D.J. and  Newell A.C., Nonlinear wave envelopes. J. Math. Phys. 46(1967): 133--139.
\bibitem{Zakharov1}Zakharov V.E.,  Stability of periodic waves of finite amplitude on the surface of a deep fluid.
J. Appl. Mech.
 Tech. Phys. 9(1968): 190--194.
\bibitem{Hasegawa1}Hasegawa A. and Tappert F., Transmission of stationary nonlinear optical pulses in dispersive
dielectric fibers. Appl. Phys. Lett. 23(1973): 142--144.
\bibitem{Ablowitz1}Ablowitz M.J., Kaup D.J., Newell  A.C. and Segur H., Nonlinear-evolution on equations of physical significance. Phys. Rev. Lett. 31(1973): 125--127.
\bibitem{Peregrine1}Peregrine D.H., Water waves, nonlinear Schr\"odinger equations and their solutions. J. Austral. Math. Soc. Ser. B 25(1983): 16--43.
\bibitem{Akhmediev1985} Akhmediev N., Eleonskii V. M. and  Kulagin N. E., Generation of a periodic sequence of picosecond
pulses in an optical fibre: exact solutions.  Sov. Phys. JETP 62(1985): 894--899.
\bibitem{OnoratoReport} Onorato M., Residori S., Bortolozzo U., Montina A. and Arecci  F. T., Rogue waves and their
generating mechanisms in different physical contexts. Phys. Rep., 528 (2013), 47--89.
\bibitem{Akhmediev1}Akhmediev N., Ankiewicz  A. and Soto-Crespo J.M., Rogue waves and rational solutions
of the nonlinear Schr\"odinger equation. Phys. Rev. E 80(2009): 026601.
\bibitem{Dubard1}Dubard P., Gaillard P.,  Klein C. and  Matveev V.B., On multi-rogue wave solutions of the
NLS and positon solutions of the KdV equation. Eur. Phys. J. Spec. Top. 185(2010): 247--258.
\bibitem{Dubard2}Dubard P. and Matveev V.B., Multi-rogue waves solutions to the focusing NLS and
the KP-I equation. Nat. Hazards Earth. Syst. Sci. 11(2011): 667--672.
\bibitem{Gaillard1}Gaillard P., Families of quasi-rational solutions of the NLS and multi-rogue waves.
J. Phys. A Math. Theor. 44(2011): 435204.
\bibitem{Ankiewicz1}Ankiewicz A., Kedziora D.J. and Akhmediev N., Rogue wave triplets. Phys. Lett. A 375(2011): 2782--2785.
\bibitem{Kedziora1} Kedziora
 D.J.,  Ankiewicz A. and  Akhmediev N., Circular rogue wave clusters. Phys. Rev. E
84(2011): 056611.
\bibitem{Ohta1}Ohta Y. and Yang J.K., General high-order rogue waves and their dynamics in the nonlinear Schr\"odinger equation. Proc. R. Soc. A Mathematical, Physical and Engineering Science 468(2012): 1716--1740.
\bibitem{Guo1}Guo B.{}{L.}, Ling L.{}{M.} and Liu Q.P., Nonlinear Schr\"odinger equation: Generalized Darboux transformation and rogue wave solutions. Phys. Rev. E 85(2012): 026607.
\bibitem{He1}He J.S., Zhang H.R., Wang L.H., Porsezian K. and  Fokas A.S., Generating mechanism for higher-order rogue waves. Phys. Rev. E 87(2013): 052914.
\bibitem{Shrira}  V. I. Shrira and V. V. Geogjaev, What makes the Peregrine soliton so special as a prototype of freak waves?, Journal of Engineering Mathematics {67}(2010): 11-22.
\bibitem{NatureReview} J. M. Dudley, F. Dias, M. Erkintalo and G. Genty, Instabilities, breathers and rogue waves in optics, Nature Photonics {\bf8}(2014): 755-7664.
\bibitem{Akhmediev2}Akhmediev, N., Ankiewicz, A. and Taki, M., Waves that appear from nowhere and disappear
without a trace. Phys. Lett. A 373(2009): 675--678
\bibitem{ZakharovGelash}Zakharov  V. E. and Gelash A. A., Nonlinear stage of modulation instability. Phys. Rev. Lett. 111(2013), 054101.
\bibitem{GelashZakharov} Gelash A.A. and Zakharov  V.E., Superregular solitonic solutions: A novel scenario for the nonlinear stage of modulation instability. Nonlinearity, 27(2014): R1-R39.
\bibitem{KharifPelinovsky} Kharif C. and Pelinovsky  E., Physical mechanisms of the rogue wave phenomenon. Eur. J. Mech.
B-Fluids, 22(2003): 603--634.
\bibitem{Pelinovshy1}Pelinovsky E. and Kharif C., Extreme ocean waves (Springer, Berlin, Heidelberg, 2008).
\bibitem{KharifPelinovskySlunyaev}  Kharif C., Pelinovsky E. and  Slunyaev A.  Rogue Waves in the Ocean. (Springer, Heidelberg, 2009).
\bibitem{Osborne1}Osborne A.R., Nonlinear ocean waves and the inverse scattering transform (Academic Press, New York 2010).
\bibitem{Shats1} Shats M., Punzmann H. and Xia H., Capillary rogue wave. Phys. Rev. Lett. 104(2010): 104503.
\bibitem{Solli1}Solli D.R., Ropers C., Koonath P. and Jalali B., Optical rogue waves. Nature 450(2007): 1054--1057.
\bibitem{Solli2}Solli D.R., Ropers C. and Jalali B., Active control of rogue waves for stimulated supercontinuum generation. Phys. Rev. Lett. 101(2008): 233902.
\bibitem{Dudley1}Dudley J.M., Genty G. and Eggleton B.J., Harnessing and control of optical rogue waves in supercontinuum generation. Opt. Express 16(2008): 3644--3651.
\bibitem{Bludov1} Bludov Yu.V., Konotop V.V. and Akhmediev N., Matter rogue waves. Phys. Rev. A 80(2009): 033610.
\bibitem{onorato}Onorato M., Residori S., Bortolozzo U.,  Montinad A. and  Arecchi F.T., Rogue waves and their generating mechanisms in different physical contexts. Phys. Reports 528(2013):47--89.
\bibitem{Kibler1}Kibler B., Fatome J., Finot C., Millot G.,  Dias F., Genty G.,  Akhmediev N. and Dudley J.M., The
Peregrine soliton in nonlinear fibre optics. Nat. Phys. 6(2010), 790--795.
\bibitem{Chabchoub1}Chabchoub A., Hoffmann N.P. and Akhmediev N., Rogue wave observation in a water wave tank. Phys. Rev. Lett. 106(2011), 204502.
\bibitem{Chabchoub2}Chabchoub A., Hoffmann N., Onorato M.,  Slunyaev A., Sergeeva A.,  Pelinovsky E. and Akhmediev N., Observation of a hierarchy of up to fifth-order rogue waves in a water tank. Phys. Rev. E 86(2012): 056601.
\bibitem{Chabchoub3}Chabchoub A., Hoffmann N., Onorato M. and Akhmediev N., Super rogue waves: observation of a higher-order breather in water waves. Phys. Rev. X 2(2012): 011015.
\bibitem{Chabchoub4}Chabchoub A. and Akhmediev N., Observation of rogue wave triplets in water waves. Phys. Lett. A 377(2013): 2590-2593.
\bibitem{Bailung1}Bailung H., Sharma S.K. and Nakamura Y., Observation of Peregrine solitons in a multicomponent
plasma with negative ions. Phys. Rev. Lett. 107(2011), 255005.
\bibitem{Sharma1}Sharma S.K. and Bailung H., Observation of hole Peregrine soliton in a multicomponent plasma with critical density of negative ions. J. Geophys. Res. Space Phys. 118(2013): 919--924.

\bibitem{Ankiewicz2}Ankiewicz A., Soto-Crespo J.M. and Akhmediev N., Rogue waves and rational solutions of the Hirota equation. Phys. Rev. E 81(2010): 046602.
\bibitem{He2} Tao Y.S. and  He J.S., Multisolitons, breathers, and rogue waves for the Hirota equation generated by the Darboux transformation. Phys. Rev. E 85(2012): 026601.
\bibitem{Bandelow1}Bandelow U. and  Akhmediev N., Sasa-Satsuma equation: Soliton on a background and its limiting cases. Phys. Rev. E 86(2012): 026606.
\bibitem{Chenshihua}Chen S.H., Twisted rogue-wave pairs in the Sasa-Satsuma equation. Phys. Rev. E 88(2013): 023202.

\bibitem{He3}He J.S., Xu S.W. and Porsezian K., Rogue waves of the Fokas-Lenells equation. J. Phys. Soc. Jpn. 81(2012):  124007.
\bibitem{He4}He J.S., Xu S.W. and Porsezian K., New types of rogue wave in an erbium-doped fibre system. J. Phys. Soc. Jpn. 81(2012): 033002.
\bibitem{Li1}Li C.Z., He J.S. and Porseizan K., Rogue waves of the Hirota and the Maxwell-Bloch equations. Phys. Rev. E  87(2013): 012913.
\bibitem{Zha1}Zha Q.L., On Nth-order rogue wave solution to the generalized nonlinear Schr\"odinger equation. Phys. Lett. A 377(2013): 855--859.
\bibitem{Wang1}Wang L.H., Porsezian K. and He J.S., Breather and rogue wave solutions of a generalized nonlinear Schr\"odinger equation. Phys. Rev. E 87(2013): 053202.
\bibitem{Xu1}Xu S.W., He J.S. and Wang L.H., The Darboux transformation of the derivative nonlinear Schr\"odinger equation. J. Phys. A: Math and Theor. 44(2011): 305203.
\bibitem{Xu2}Xu S.W. and He J.S., The rogue wave and breather solution of the Gerdjikov-Ivanov equation. J. Math. Phys. 53(2012): 063507.
\bibitem{Guoo1}Guo L.J., Zhang Y.S., Xu S.W., Wu Z.W. and He J.S., The higher order rogue wave solutions of the Gerdjikov-Ivanov equation. Phys. Scr. 89(2014): 035501.
\bibitem{Guo2}Guo B.L., Ling L.M. and Liu Q.P., High-order solutions and generalized Darboux transformations of derivative nonlinear Schr\"odinger equations. Stud. App. Math. 130(2013): 317--344.
\bibitem{Zhang1}Zhang Y.S., Guo L.J., Xu S.W., Wu Z.W. and He J.S., The hierarchy of higher order solutions of the derivative nonlinear Schr\"odinger equation. Commun. Nonl. Sci. Num. Simu. 19(2014): 1706--1722.
\bibitem{He5}He J.S., Charalampidis E.G., Kevrekidis P.G. and Frantzeskakis D.J., Rogue waves in nonlinear Schr\"odinger models with variable coefficients: application to Bose-Einstein condensates. Phys. Lett. A 378(2014): 577--583.
\bibitem{Xu3}Xu S.W., He J.S. and Wang L.H., Two kinds of rogue waves of the general nonlinear Schr\"odinger equation with derivative. Europhys. Lett. 97(2012): 30007.
\bibitem{Ohta2}Ohta Y. and  Yang J.K., Rogue waves in the Davey-Stewartson I equation. Phys. Rev. E. 86(2012): 036604.
\bibitem{Ohta3}Ohta Y. and Yang J.K., Dynamics of rogue waves in the Davey-Stewartson II equation. J. Phys. A: Math. and Theor. 46(2013): 105202.
\bibitem{Dubard3}Dubard P. and Matveev V.B., Multi-rogue waves solutions: from NLS to KP-I equation. Nonlinearity 26(2013), R93--R125.
\bibitem{CPL1}He J.S., Xu S.W., Ruderman M.S. and Erd\'elyi R., State transition induced by self-steepening and self phase-modulation. Chin. Phys. Lett. 31(2014): 010502.
\bibitem{PRE1}He J.S., Wang L.H., Li L.J., Porsezian K. and Erd\'elyi R., Few-cycle optical rogue waves: Complex modified Korteweg-de Vries equation. Phys. Rev. E 89(2014): 062917.
\bibitem{Degasperis1} Baronio F., Degasperis A.,  Conforti M. and  Wabnitz S., Solutions of the vector nonlinear Schr\"o dinger Equations: evidence for deterministic rogue waves. Phys. Rev. Lett. 109(2012): 044102.
\bibitem{Degasperis2} Baronio F., Conforti M., Degasperis A. and  Lombardo S., Rogue waves emerging from the resonant interaction of three waves. Phys. Rev. Lett. 111(2013): 114101.
\bibitem{Degasperis3} Baronio F.,  Conforti M.,  Degasperis A.,  Lombardo S.,  Onorato M. and  Wabnitz S., Vector rogue waves and baseband modulation instability in the defocusing regime. Phys. Rev. Lett. 113(2014): 034101.
\bibitem{Kundu1}Kundu A., Landau-Lifshitz and higher order nonlinear systems gauge generated from nonlinear Schr\"odinger type equations. J. Math. Phys. 25(1984): 3433--3438.
\bibitem{Chow1}Chan H.N., Chow K.W.,
 Kedziora D.J. and Grimshaw R.H.J, Rogue wave modes for a derivative nonlinear Schr\"odinger model. Phys. Rev. E 89(2014): 032914.
\bibitem{Chen1}Chen H.H., Lee Y.C. and Liu C.S., Integrability of nonlinear Hamiltonian systems by inverse scattering method. Phys. Scr. 20(1979): 490--492.
\bibitem{Moses1}Moses J., Malomed B.A. and Wise F.W., Self-steepening of ultrashort optical pulses without self-phase-modulation. Phys. Rev. A 76(2007): 021802.
\bibitem{Demartini1}DeMartini F.,  Townes C.H., Gustafson T.K. and  Kelley P.L., Self-steepening of light pulses. Phys. Rev. 164(1967): 312--323.
\bibitem{Grischkowsky1}Grischkowsky D., Courtens E. and  Armstrong J.A, Observation of self-steepening of optical pulses with possible shock formation. Phys. Rev. Lett. 31(1973): 422--425.
\bibitem{DudleyPhysicsToday} Dudley  J. M. and  Genty G., Supercontinuum light. Phys. Today 66 (2013): 29-34.
\bibitem{ChabchoubSupercontinuum} A. Chabchoub, N. Hoffmann, M. Onorato, G. Genty, J. M. Dudley and N. Akhmediev, Hydrodynamic Supercontinuum, Phys. Rev. Lett. { 113}(2013): 054104.
\bibitem{Brabec1}Brabec T. and Krausz F., Nonlinear optical pulse propagation in the single-cycle regime. Phys. Rev. Lett. 78(1997): 3282-3285.
\bibitem{Tzoar1}Tzoar N. and Jain M., Self-phase modulation in long-geometry optical waveguides. Phys. Rev. A 23(1981): 1266--1270.

\bibitem{Anderson1}Anderson D. and  Lisak M., Nonlinear asymmetric self-phase modulation and self-steepening of pulses  in long optical waveguides. Phys. Rev. A 17(1983): 1393--1398.
\bibitem{Dysthe}Dysthe K.B., Note on the modification of the nonlinear Sch\"odinger equation for application to deep water waves, Proc. R. Soc. London A 369(1979): 105--114.
\bibitem{Clarkson1}Clarkson P.A. and Cosgrove C.M., Painlev\'e analysis of the non-linear Schrodinger family of equations. J. Phys. A: Math. Gen. 20(1987): 2003--2024.
\bibitem{Lv}L\"u X. and Peng M.S., Systematic construction of infinitely many conservation laws for certain nonlinear evolution equations in mathematical physics. Commun. Nonlinear. Sci. Numer. Simulat. 18(2013): 2304--2312.
\bibitem{Abreather} Akhmediev N.  and  Korneev  V. I.,  Modulation instability and periodic
solutions of the nonlinear Schr\"odinger equation. Theor.
 Math. Phys. 69(1986): 1089--1093.
\bibitem{Kuznetsovbreather}  Kuznetsov E. A.,  Solitons in a parametrically unstable plasma.
{Sov. Phys. Doklady} {22}(1977): 507 -- 508.
\bibitem{Mabreather} Ma Y. C., The perturbed plane-wave solutions of the cubic Schr\"odinger equation.
Stud. Appl. Math. 60(1979): 43--58.


\end{thebibliography}
\end{document}